\newcommand{\ceil}[1]{\left\lceil #1 \right\rceil}
\newtheorem{theorem}{Theorem}[]
\newtheorem{lemma}{Lemma}[]
\newtheorem{proposition}{Proposition}[]
\newtheorem{definition}{Definition}[]
\newtheorem{example}{Example}[]
\renewcommand{\[}{\begin{equation}}
\renewcommand{\]}{\end{equation}}
\renewcommand\vec{\mathbf}
\newcommand{\cyan}[1]{\textcolor{cyan}{#1}}
\newcommand{\DL}[1]{\cyan{[DL: #1]}}
\renewcommand{\DL}[1]{}
\begin{document}


\title{Continuous measurement-based holonomic quantum computation}

\author{Anirudh Lanka}
\affiliation{Department of Electrical and Computer Engineering, University of Southern California, Los Angeles, California}

\author{Juan Garcia-Nila}
\affiliation{Department of Electrical and Computer Engineering, University of Southern California, Los Angeles, California}

\author{Todd A. Brun}
\affiliation{Department of Electrical and Computer Engineering, University of Southern California, Los Angeles, California}

\begin{abstract}
We propose a scheme to generate holonomies using the Quantum Zeno effect, enabling logical unitary operations on quantum stabilizer codes purely through measurements. The quantum error-correcting code space is adiabatically rotated by measuring a succession of rotated stabilizer generators. When the rotation is sufficiently slow, the state remains confined to the instantaneous code space by the Zeno effect; otherwise, measurement-induced jumps can occur into a rotated orthogonal subspace. If the rotation completes a closed loop, the code state is transformed by a holonomy: a logical unitary transformation. We analytically derive the sequence of rotated stabilizer generators that produce a desired holonomy, and find the total time required to implement this procedure with a given success probability. If a measurement moves the state to the orthogonal subspace, we present a method to alter the path of the rotated observables to return the state either to the original code or the original error space with the desired holonomy; in the latter case, the holonomy is emulated. Finally, we establish conditions on the code and the measured observables that preserve the correctability of a given error set. When a code fails to meet the error-correcting conditions, our protocol can be applicable by augmenting the code with ancilla qubits.

\end{abstract}

\maketitle

\section{Introduction}

Achieving robust universal quantum computation requires a delicate balance between isolating computational degrees of freedom, to minimize environmental decoherence, and maintaining precise dynamical control. Limitations due to this trade-off often result in noise-induced errors. To mitigate them, quantum information is encoded in higher-dimensional systems, with computation performed on the encoded qubits \cite{Shor_QEC, Steane_QEC, Lidar_Brun_2013}. In circuit-based quantum computing (CQC), universal logical gate sets typically involve multi-qubit interactions, which can propagate errors and make them uncorrectable, requiring fault-tolerant design \cite{resilient_klz, Shor_FTQC, kitaev_2003}. While transversal gates are fault-tolerant, the Eastin-Knill theorem limits their universality \cite{eastin_knill_PRL}.

Geometric approaches, such as holonomic and topological quantum computation, offer promising pathways to universal fault-tolerant computation \cite{zanardi_1999, kitaev_2003, freedman2002topologicalquantumcomputation, sjoqvist_non-adiabatic_2012, Lidar_Brun_2013, Lahtinen_2017, Lidar_holonomyDFS, zheng_brun, oreshkov_subsystem_hqc, zhang_surface_hqc, hqc_stab_codes}. Holonomic Quantum Computing (HQC) performs logical operations by using the geometric phase acquired by traversing a parameterized code space along loops in a suitable manifold, where the path determines the resulting unitary operation \cite{zanardi_1999}. This method has been shown to be robust against certain systematic errors \cite{Cesare_2015, oreshkov_2009, buividovich_fidelity_2006, solinas_robustness_2004, chen_robust_2020, liang_robust_2025, shen_accelerated_2023, florio_robust_2006, solinas_stability_2012, fuentes_guridi_holonomic_2005, sarandy_abelian_2006}. Conversely, Topological Quantum Computing (TQC) relies on the braiding of anyonic excitations, and is inherently fault-tolerant due to its dependence on the topology rather than the specific geometry of the path \cite{kitaev_2003, non_abelian_tqc}. While TQC remains mostly theoretical, experimental demonstrations of HQC across various quantum architectures have been proposed and some achieved \cite{wassner_holonomic_2025, feng_experiment_nonad, toyoda_experim_holon, Solinas_semicond, albert_hqc_cat, trapped_ion_hqc}.

Conventionally, HQC is performed by encoding quantum information in an eigenspace of a degenerate Hamiltonian and adiabatically evolving the system along closed loops in a control parameter space \cite{sjoqvist_non-adiabatic_2012, Zheng_2015, zheng_brun, farhi2000}. This adiabatic evolution preserves the degeneracy of the Hamiltonian and induces non-Abelian geometric phases---holonomies---that enable universal quantum computation. However, adiabatic evolution is not the only mechanism by which degeneracy can be preserved. It is well-established that the adiabatic theorem and the quantum Zeno effect are closely related \cite{facchi_qzs, facchi_2003, kitano_zeno, burgarth_onebound, facchi_comparison_2005}. In this paper, we explore the possibility of generating arbitrary holonomies through frequent measurements. A similar approach, using sequences of projective measurements to implement \emph{discrete} holonomies, was explored in \cite{Mommers_2022} for spin-coherent states (SCS). In this work, we provide an in-depth analysis of both discrete- and continuous-path holonomies for arbitrary stabilizer codes.

Discrete holonomies can be generated by incrementally rotating the quantum error-correcting code space and measuring the corresponding rotated projector, continuing the process until the rotation completes a closed loop. The resulting holonomy corresponds to a logical unitary gate that depends on the chosen path. When the rotation increment is sufficiently small, the state is projected to the rotated code space with high probability due to the Zeno effect. In the limit of infinitely many \emph{weak} measurements and infinitesimal rotations, we can generate continuous-path holonomies. Interestingly, we find that the probability to stay in the code space differs between the discrete and continuous approaches.

The measurement-based holonomic approach is particularly appealing since it also provides partial information about the state of the system. This is in general sufficient to identify whether the system is in the correct eigenspace without further disturbing the state. If it is not, one can appropriately control the system to bring it back to the desired eigenspace. This is true regardless of whether the rotations are discrete or continuous. This is unlike conventional Hamiltonian-based HQC, where there are no measurements to reveal errors along the path.

A quantum error-correcting code is associated with a (non-unique) ``correctable'' set of errors, denoted by $\mathcal{E}$. The Knill-Laflamme condition gives necessary and sufficient criteria for a code to correct $\mathcal{E}$. During the holonomic procedure, as the code space is rotated, the instantaneous codes should also be able to correct the same error set $\mathcal{E}$. We find sufficient conditions to satisfy this requirement for arbitrary stabilizer codes. For certain codes that do not satisfy these conditions, we can make the procedure work by appending ancilla qubits.

The main results of this work are:
\begin{itemize}
    \item We introduce a framework to generate a continuous-path holonomy using a sequence of both projective and weak measurements.
    \item We analytically derive the probabilities of staying in the desired eigenspace in both cases.
    \item In case of a measurement-induced error, where the system is projected into a state orthogonal to the instantaneous code space, we show a way to dynamically modulate the path to correct it, and bring the state back either to the original code or the error space.
    \item We find sufficient conditions for the instantaneous code to satisfy the error-correcting conditions throughout the loop.
\end{itemize}

This paper is organized as follows: we provide preliminary information about the dynamics in quantum Zeno subspaces and find the associated holonomy in \cref{sec:prelims}. We construct a framework to perform measurement-based HQC in \cref{sec:mhqc}. We first consider discrete holonomy in \cref{sec:proj_meas}, and show how to achieve implicit fault-tolerance, in \cref{sec:fault_tolerance}. In \cref{sec:continuous_meas}, we extend that framework to allow continuous weak measurements while simultaneously rotating the code space. We derive the error-correcting conditions and show how to satisfy them throughout the evolution in \cref{sec:qec_conds}. We give concluding remarks and discuss the future scope of this work in \cref{sec:discussion}.

\section{Preliminaries}\label{sec:prelims}

\subsection{Quantum Zeno dynamics}\label{sec:qze}

Suppose a quantum system with Hilbert space $\mathcal{H}$ undergoes a series of measurements of a time-dependent family of observables
\[
    O(t)=U(t, 0)O_0U^\dagger(t, 0),
\]
where $U(t, 0)$ is a family of unitary matrices, and $O_0 \equiv O(0)$. For the measurements to have a nontrivial effect on the state, we require that
\[
    [U(t_1, 0), U(t_2, 0)] \neq 0 \quad \forall t_1 \neq t_2 .
\]
In a small time interval $\delta t$, the evolution unitary is
\[
    U(t+\delta t, t) = \exp(-iH(t)\delta t) + \mathcal{O}(\delta t^2),
\]
where $H(t)$ is a Hamiltonian that can be time-dependent. When $\delta t \ll 1$, then $U(t+\delta t, t) \approx I$, i.e., the unitary is \emph{weak}; the observable $O(t)+\delta t$ is rotated by a small amount from $O(t)$. Observe that $O(t)$ always remains Hermitian, and its eigenvalues are unchanged, since unitary transformations preserve eigenvalues. We can write the rotated observable as
\[
    O(t)=\sum_s \alpha_s \mathbb{P}_s(t),
\]
where $\mathbb{P}_s(t)$ is the projector onto the $\alpha_s$-eigenspace; let us denote this eigenspace by $\mathcal{H}_s(t)$. We can also write $O(t)$ as
\[
O(t) = \sum_s \alpha_s \left[U(t, 0) \mathbb{P}_s(0) U^\dagger(t, 0)\right],
\]
i.e., the projectors onto the eigenspaces of the rotated observables also evolve through the same unitary transformation.

Suppose that a state $\ket{\psi(t)} \in \mathcal{H}_0(t)$ undergoes a projective measurement of $O(t+\delta t)$. The probability that the state will be left in $\mathcal{H}_0(t+\delta t)$ is
\[
\begin{aligned}
p_0 &= \bra{\psi(t)} \mathbb{P}_0(t+\delta t) \ket{\psi(t)} \\
    &= \bra{\psi(t)} U(t+\delta t, t) \mathbb{P}_0(t) U^\dagger(t+\delta t, t) \ket{\psi(t)} \\
    &= 1 - \mathcal{O}(\delta t^2),
\end{aligned}
\]
i.e., starting from the $\alpha_0$-eigenspace, when the time between the subsequent measurements $\delta t \ll 1$, the probability of being left in an eigenspace with the same eigenvalue is close to unity. In other words, all states are ``confined'' to their respective instantaneous subspaces throughout the evolution; these subspaces are formally known as \emph{Quantum Zeno Subspaces} (QZS). The state evolves into
\[
\begin{split}
    \ket{\psi(t+\delta t)} &= \frac{\mathbb{P}_0(t+\delta t)\ket{\psi(t)}}{\sqrt{p_0}} \\
        &= \left(\mathbb{P}_0(t) -i[H(t), \mathbb{P}_0(t)]\delta t \right) \ket{\psi(t)} + \mathcal{O}(\delta t^2).
\end{split}
\]
In the limit $\delta t \rightarrow 0$, we obtain a Markovian differential equation for the state evolution:
\[\label{eq:zeno_dynamics}
    \ket{\mathrm{d}\psi} = -i[H(t), \mathbb{P}_0(t)]\ket{\psi}\mathrm{d}t.
\]
It is important to note that this evolution is \emph{not} unitary since $\tilde{H}(t) \equiv [H(t), \mathbb{P}_0(t)]$ is anti-Hermitian. However, as we shall see, such dynamics can produce a closed-loop holonomy within a Zeno subspace.

\subsection{Generating holonomies}

We now introduce several algebraic constructions that are useful in finding the holonomy. We follow the theory presented in \cite{zheng_brun}, and show the equivalence between the obtained holonomies in the adiabatic and Zeno regimes. This gives yet more evidence that the two effects are analogous.

Suppose that we are given a collection of independent vector subspaces. A natural geometric description of this collection is the complex Grassmannian manifold $\mathcal{B} \equiv G_{N,K}(\mathbb{C})$ defined by
\[
    \mathcal{B} = \{\mathbb{P} \in M(N, N; \mathbb{C})| \mathbb{P}^2=\mathbb{P}, \mathbb{P}^\dagger=\mathbb{P}, \text{Tr}\{\mathbb{P}\}=K\},
\]
where $M(N, N; \mathbb{C})$ is the set of all $N \times N$ complex matrices. Hence, each point on $G_{N, K}(\mathbb{C})$ is a projector onto a $K$-dimensional subspace of $\mathbb{C}^N$. The set of basis vectors for a given subspace is not unique; different choices of basis are related by transformations from the unitary group $U(K)$. The \emph{fiber} at a given point in $\mathcal{B}$ consists of all possible orthonormal bases for that subspace. The collection of all such fibers forms a \emph{fiber bundle}, known as the complex Stiefel manifold $\mathcal{F}\equiv S_{N, K}(\mathbb{C})$, which is defined as
\[
    \mathcal{F} = \{L \in M(N, K; \mathbb{C})|L^\dagger L=I_K\},
\]
where $I_K$ is the $K$-dimensional identity. Intuitively, the columns of $L$ form an orthonormal basis for a subspace. In the context of quantum error-correcting codes, a point $\mathbb{P} \in \mathcal{B}$ represents the projector onto the code space, while a point $L$ in the fiber over $\mathbb{P}$ contains the code basis states.

The left action of an $N$-dimensional unitary matrix gives a transformation within $G_{N, K}(\mathbb{C})$:
\[
    U(N) \times \mathcal{B} \rightarrow \mathcal{B}, \;\;\;\; (g, \mathbb{P}) \mapsto g\mathbb{P}g^\dagger.
\]
Define the projection map $\pi: \mathcal{F} \rightarrow \mathcal{B}$ as
\[
    \pi(L) \equiv LL^\dagger.
\]
The right action of a $K\times K$ unitary matrix $h$ gives a transformation within the fiber associated with a point in $\mathcal{B}$:
\[\label{eq:transform_fiber}
    \mathcal{F} \times U(K) \rightarrow \mathcal{F}, \;\;\;\; (L,h) \mapsto Lh.
\]
It is easy to see that the bases $V$ and $Vh$ span the same subspace and hence correspond to the same point on $G_{N, K}(\mathbb{C})$, since
\[
    \pi(Lh) = (Lh)(Lh)^\dagger = Lhh^\dagger L^\dagger = LL^\dagger = \pi(L).
\]
Finally, the tuple $\left(\mathcal{F}, \mathcal{B}, \pi, U(K)\right)$ that we operate on is called the \emph{principal bundle}.

Suppose that a point $\mathbb{P}_0$ on $\mathcal{B}$ moves along a curve. Without loss of generality, assume that this motion arises by rotating the subspace projector $\mathbb{P}_0$ with a parameterized unitary operator $V(t)$:
\[
    \mathbb{P}(t) = V(t) \mathbb{P}_0 V^\dagger(t).
\]
The associated curve on $\mathcal{F}$ depends on its \emph{connection}. The canonical connection on $\mathcal{F}$ is defined as a $\mathfrak{u}(K)$-valued $1$-form on $\mathcal{B}$:
\[\label{eq:connection}
    A=L^\dagger(t) \frac{\mathrm{d}L(t)}{\mathrm{d}t}.
\]
This is the unique connection that is invariant under the transformation in \cref{eq:transform_fiber}:
\[
\begin{aligned}
    \tilde{A} &= (L(t)h)^\dagger \frac{\mathrm{d}(L(t)h)}{\mathrm{d}t} \\
    &= h^\dagger Ah + h^\dagger\mathrm{d}h,
\end{aligned}
\]
which is called a \emph{gauge} transformation.

We are now ready to apply this formalism to measurement-based holonomic quantum computation. For each $t$, there exists $V(t) \in \mathcal{F}$ such that $\mathbb{P}(t) = L(t)L^\dagger(t)$. In the limit of frequent measurements by $\mathbb{P}(t)$, the Zeno effect maintains the state in the instantaneous code space throughout the evolution by the argument presented in \cref{sec:qze}, and hence we can substitute the state $\ket{\psi(t)} \in \mathbb{C}^N$ with the reduced state vector $\ket{\phi(t)} \in \mathbb{C}^K$:
\[\label{eq:reduced_sv}
    \ket{\psi(t)} = L(t)\ket{\phi(t)}.
\]
Substituting \cref{eq:reduced_sv} in \cref{eq:zeno_dynamics}, we obtain
\[
    \frac{\mathrm{d}\big(L(t)\ket{\phi(t)}\big)}{\mathrm{d}t} = -i\big[H(t), \mathbb{P}(t) \big] L(t)\ket{\phi(t)}.
\]
Using $L(t)L^\dagger(t)=\mathbb{P}(t)$ and $L^\dagger(t)L(t) = I_K$, we have
\[
    \ket{\mathrm{d}\phi} + L^\dagger\mathrm{d}L\ket{\phi(t)} = 0,
\]
whose solution can be represented as
\[
    \ket{\phi(t)} = \mathcal{P}\exp\left(-\int L^\dagger \mathrm{d}L\right)\ket{\phi(0)},
\]
where $\mathcal{P}$ indicates path ordering. Therefore, $\ket{\psi(t)}$ can be written as
\[\label{eq:final_state_with_vertical_lift}
    \ket{\psi(t)} = L(t)\mathcal{P}\exp\left(-\int L^\dagger \mathrm{d}L\right)L^\dagger(0)\ket{\psi(0)}.
\]
If the subspace evolves through a loop and returns to the initial subspace, $\mathbb{P}(T) = \mathbb{P}(0)$, the holonomy $\Gamma \in U(K)$ is defined as
\[
    \Gamma = L^\dagger(0)L(T)\mathcal{P}\exp\left(-\int L^\dagger \mathrm{d}L\right),
\]
and the final state is
\[
    \ket{\psi(T)} = L(0)\Gamma\ket{\phi(0)}.
\]

A general solution for the curve $L(t)$ is
\[\label{eq:general_stiefel_curve}
    L(t) = V(t)L(0)h(t),
\]
where $h(t) \in U(K)$. Intuitively, this can be thought of as a two-step process: 1) rotate the basis states $L(0)$ using $V(t)$ so that $\pi(\tilde{L}(t)) = \mathbb{P}(t)$, where $\tilde{L}(t) \equiv V(t)L(0)$; 2) rotate the basis states within the fiber using $h(t)$, giving $L(t)$. If the condition
\[\label{eq:horizontal_lift}
    L^\dagger \frac{\mathrm{d}L}{\mathrm{d}t} = 0 \; \,\forall t
\]
is satisfied, the curve $L(t) \in \mathcal{F}$ is called the {\it horizontal lift} of the curve $\pi(L(t)) \in \mathcal{B}$ (as shown in \cref{fig:holonomy}). Then the holonomy simplifies to
\[
    \Gamma = L^\dagger(0) L(T) \in U(K) .
\]

\begin{figure}
    \centering
    \includegraphics[width=0.8\linewidth]{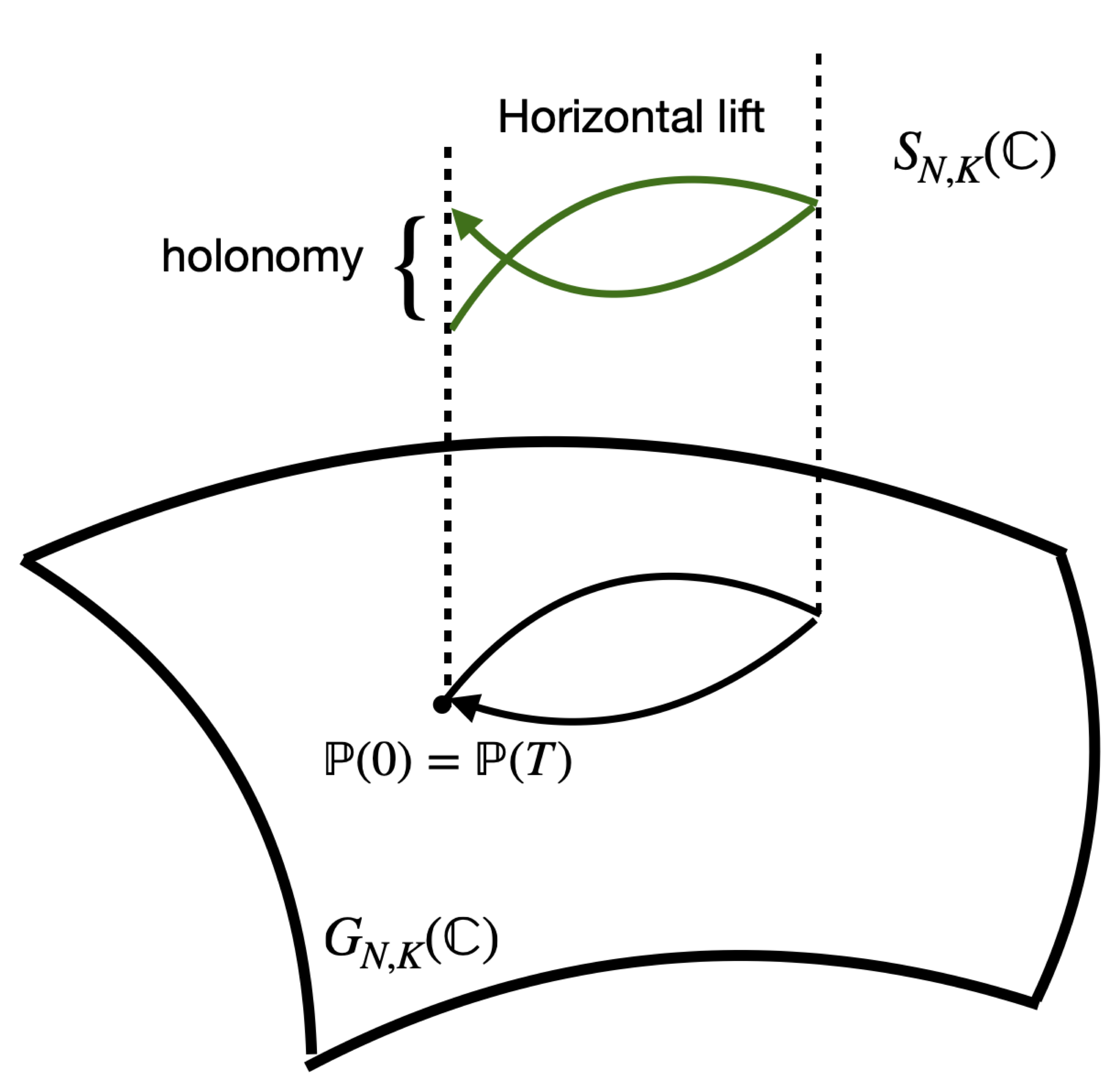}
    \caption{The horizontal lift as a unique curve in $S_{N, K}(\mathbb{C})$ with the base manifold $G_{N, K}(\mathbb{C})$. The difference between the initial point $V(0)$ and the final point $V(T)$ is the holonomy.}
    \label{fig:holonomy}
\end{figure}

Substituting \cref{eq:general_stiefel_curve} into \cref{eq:horizontal_lift}, we obtain a differential equation for $h(t)$:
\[\label{eq:lift_deq}
    \frac{\mathrm{d}h(t)}{\mathrm{d}t} = -L^\dagger(0)V^\dagger(t)\frac{\mathrm{d}V(t)}{\mathrm{d}t}L(0)h(t).
\]
It is a well-known result of differential geometry that the horizontal lift associated with a given connection for a principal bundle is unique.
\begin{proposition}
    Suppose $\gamma:[0, 1] \rightarrow \mathcal{B}$ is a curve in $\mathcal{B}$, and let $L_0 \in \pi^{-1}(\mathbb{P}_0)$. Then there exists a unique horizontal lift in $\mathcal{F}$ such that $L(0) = L_0$.
\end{proposition}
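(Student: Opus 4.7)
The plan is to reduce the statement to the Picard-Lindel\"of theorem applied to the ODE \cref{eq:lift_deq}. Given $\gamma:[0,1]\to\mathcal{B}$, I would first pick any smooth family $V:[0,1]\to U(N)$ with $V(0)=I_N$ and $V(t)\mathbb{P}_0 V^\dagger(t)=\gamma(t)$; such a family exists because the principal bundle $U(N)\to G_{N,K}(\mathbb{C})=U(N)/(U(K)\times U(N-K))$ admits local smooth sections, and $[0,1]$ is compact so finitely many local lifts of $\gamma$ can be patched together. Because any two orthonormal bases of the same $K$-dimensional subspace are related by a right action of $U(K)$, every curve $L(t)\in\mathcal{F}$ projecting to $\gamma(t)$ and satisfying $L(0)=L_0$ has the unique representation $L(t)=V(t)L_0 h(t)$ with $h(t)\in U(K)$ and $h(0)=I_K$.

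Next, I would substitute this representation into the horizontality condition \cref{eq:horizontal_lift}. Using $L_0^\dagger L_0=I_K$ and $V^\dagger V=I_N$ it collapses exactly to \cref{eq:lift_deq}, namely $\frac{dh}{dt}=-M(t)h$ with $M(t):=L_0^\dagger V^\dagger(t)\frac{dV(t)}{dt}L_0$. Since $V$ is unitary, $V^\dagger\frac{dV}{dt}$ is skew-Hermitian, and its compression $M(t)$ to the range of $L_0$ inherits skew-Hermiticity; hence $M(t)\in\mathfrak{u}(K)$ for every $t$, which is precisely the condition that guarantees the flow preserves unitarity. A short calculation of $\frac{d}{dt}(h^\dagger h)$ confirms that the solution initialized at $h(0)=I_K$ remains in $U(K)$ on the whole interval.

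Existence and uniqueness of $h:[0,1]\to U(K)$ then follow immediately from Picard-Lindel\"of applied to the continuous, globally Lipschitz (in fact linear) right-hand side $h\mapsto -M(t)h$; setting $L(t):=V(t)L_0 h(t)$ furnishes the desired horizontal lift. For uniqueness of $L$, any competing horizontal lift $\tilde L$ with $\tilde L(0)=L_0$ admits the representation $\tilde L=V L_0\tilde h$ with $\tilde h(0)=I_K$ using the same $V$, and the horizontality condition forces $\tilde h$ to satisfy the same linear ODE; Picard-Lindel\"of uniqueness then gives $\tilde h=h$ and hence $\tilde L=L$. The main technical obstacle is the preliminary construction of the smooth frame $V$, since the bundle $U(N)\to G_{N,K}(\mathbb{C})$ is globally nontrivial and one must verify that the patching of local lifts does not disturb the argument; this is settled by noting that any two admissible choices differ by right multiplication by a smooth curve in the isotropy subgroup stabilizing $\mathbb{P}_0$, which only reparametrizes $h(t)$ without changing the resulting $L(t)$. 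Once this step is in place, the rest is a clean application of linear ODE theory.
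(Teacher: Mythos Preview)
Your proof is correct and follows essentially the same route as the paper: both reduce the horizontality condition to the linear ODE \cref{eq:lift_deq} for $h(t)$, observe that the generator is skew-Hermitian so the flow stays in $U(K)$, and then invoke existence and uniqueness of the solution (you via Picard--Lindel\"of, the paper via the path-ordered exponential $h(t)=\mathcal{P}\exp(-\int\tilde{A})$, which is of course the same thing). Your treatment is more careful than the paper's in that you explicitly construct the global frame $V$ over $[0,1]$ and separately argue uniqueness of $L$ rather than just of $h$, but the underlying strategy is identical.
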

\begin{proof}
    Since the right-hand side of \cref{eq:lift_deq} is skew-Hermitian, $h(t) \in U(K) \; \forall t$. Suppose $\tilde{L}(t) = V(t)L(0)$ is a particular curve in $\mathcal{F}$ with the associated connection $\tilde{A} = \tilde{L}^\dagger \mathrm{d}\tilde{L}$. For the initial condition $h(0) = I_K$, the solution to \cref{eq:lift_deq} is 
    \[
        h(t) = \mathcal{P}\exp\left(-\int\tilde{A}\right),
    \]
    and hence there exists a unique horizontal lift.
\end{proof}

Finally, in the context of quantum computation on a stabilizer code, we wish to apply a logical unitary $\bar{U} \in U(K)$ on the code $\mathcal{C} = \mathbb{P}(0)$. Since the connection in \cref{eq:connection} is unique (up to the fiber) and the horizontal lift of the connection is also unique, applying a logical unitary is equivalent to finding a loop $\mathbb{P}(t) \in \mathcal{B}$, such that $\mathbb{P}(0) = \mathbb{P}(T)$, whose horizontal lift $L(t)$ produces the holonomy $\Gamma = \bar{U}$. In the following section, we show a method to apply this holonomy to a logical state using continuous measurements.

\section{Zeno-effect-induced holonomic quantum computation}\label{sec:mhqc}

Suppose we have a family of $[\![n, k, d]\!]$ quantum stabilizer codes that are parameterized by $\phi \in \{0, 2\pi\}$. Each code has the stabilizer group $S(\phi)$ with generators $\{g_i(\phi)\}_{i=1}^{n-k}$. The parameter $\phi=0$ denotes the standard code we operate on, for which the stabilizer generators are all Pauli operators, as usual. Without loss of generality, the system's Hilbert space can be decomposed into a direct sum of the simultaneous eigenspaces of all the stabilizers:
\[
    \mathcal{H}(\phi) = \bigoplus_{s=0}^{2^{n-k}} \mathcal{H}_s(\phi).
\]
By convention, the simultaneous $+1$ eigenspace is chosen to be the code space, denoted by $\mathcal{H}_0(\phi)$. In our case, all of the eigenspaces are $2^k$-fold degenerate. Define the projector onto the standard code space as
\[
    \mathbb{P}_0 \equiv \mathbb{P}(0)=\prod_{j=1}^{n-k}\frac{I+g_j(0)}{2}.
\]
The logical gate that we wish to apply is 
\[
    G=\mathrm{exp}\left(i\theta H\right),
\]
where $H$ is a \emph{logical} Pauli operator. We can generate a holonomy by rotating the code space by $\phi$. When the rotation forms a loop and the system is back in the original code space, the holonomy corresponds to a logical operation on the standard code.

To rotate the code space, we choose a Pauli operator $X$ (not to be confused by the single-qubit Pauli matrix $\sigma^x$) that satisfies the following conditions:
\begin{subequations}\label{eq:conditions_on_X_without_noise}
    \begin{align}
        \exists g \in S(0), \; \{X, g\}=0 , \\
        \{X, H\} = 0 , \\
        \exp\left(2\pi i X\right)=I.
    \end{align}
\end{subequations}
The rotation of the code space can be described using a family of unitary operators parameterized by $\phi$:
\[\label{eq:rotation_unitary}
    V(\phi) = \mathrm{exp}\left(i\frac{\theta\phi}{2\pi}H\right)\mathrm{exp}\left(i\phi X\right),
\]
and the projector onto the rotated code space is
\[\label{eq:proj_rotated_code}
    \mathbb{P}(\phi) = V(\phi)\mathbb{P}(0)V^\dagger (\phi).
\]
Observe that $V(0)=I$ and $V(2\pi)=\mathrm{exp}\left(i\theta H\right)$. So, starting with a state $\ket{\bar{\psi}}$ in the code space, in a purely theoretical sense, multiplying it by $V(2\pi)$ applies the logical transformation we desire. However, performing that in a real system is nontrivial; the goal of this work is to provide a method using measurements to realize this.

Consider a projective measurement of all the stabilizer generators of the unrotated code on a code state $\ket{\bar{\psi}}$. Since it is already an eigenstate, the probability of obtaining the same state $\ket{\bar{\psi}}$ is 1. Now suppose that the code space is rotated by the unitary operator $V(\phi)$. The stabilizer generators for this rotated code space are also rotated by the same unitary transformation:
\[
    g(\phi)=V(\phi)g(0)V^\dagger(\phi),
\]
where $g(0)$ is a stabilizer generator of the original (unrotated) code.

Consider the extreme case where a projective measurement of all the stabilizer generators of the ``fully'' rotated code (by $V(2\pi)$) is made on the original state $\ket{\bar{\psi}}$. This also returns the same state with unit probability, since $\mathbb{P}(2\pi) = \mathbb{P}_0$. In other words, since the logical operator $H$ commutes with all the stabilizers, and the rotation unitary operator at $\phi=2\pi$ depends only on the logical operator $H$ (and not at all on $X$), measuring the stabilizers of the fully rotated code is equivalent to measuring the stabilizers of the original code.

To effect the logical unitary we want, we rotate the code space by a \emph{small} amount, measure the generators of the rotated code, and repeat until $\phi=2\pi$ and we have returned to the original code space. To do this, let us decompose the unitary $V(2\pi)$ into a sequence of unitaries:
\[
    V(2\pi) = \tilde{U}(2\pi-\delta \phi)...\tilde{U}(\delta \phi)\tilde{U}(0),
\]
where $\tilde{U}(\phi)$ is defined as
\[
    \tilde{U}(\phi) \equiv V(\phi + \delta\phi)V^\dagger(\phi) .
\]
Intuitively, the unitary operator $\tilde{U}(\phi)$ rotates the code space $\mathbb{P}(\phi)$ by a small angle $\delta\phi$. Note that $\lim_{\delta\phi\rightarrow 0} \tilde{U}(\phi) = I$, i.e., the unitary is ``weak.'' Since $X$ anticommutes with at least one generator, the rotated generators do not have the same code space as the original space. Projectively measuring the rotated stabilizers instantaneously moves the state into an eigenstate. By the argument in \cref{sec:qze}, with high probability the post-measurement state is a simultaneous $+1$ eigenstate of the rotated stabilizers. After repeating the procedure until $\phi=2\pi$, the code state will have undergone an evolution through a \emph{holonomic path} and applied the logical gate $G$.
\begin{definition}[Holonomic path]
    The tuple $(V, \Phi)$ with entries denoting a family of unitary operators $V(\phi)$ and the corresponding final parameter $\Phi$, is a holonomic path along which the code space is rotated.
\end{definition}

\subsection{Discrete-path holonomy}\label{sec:proj_meas}

To perform the holonomic evolution, one can choose between measuring either the instantaneous code space projector or the instantaneous stabilizer generators. These measurements are not the same, but they both produce the desired holonomy. Moreover, in principle, it suffices to measure only one stabilizer generator.
\begin{proposition}
    Adiabatically measuring one rotated stabilizer that anticommutes with $X$ evolves the code space along the holonomic path $(V, \phi)$.
\end{proposition}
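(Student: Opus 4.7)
The plan is to exhibit a generating set of $S(0)$ in which exactly one element anticommutes with $X$, then show that the remaining generators are invariant under $V(\phi)$ so that their measurements are redundant, and finally reduce the full Zeno projection onto $\mathbb{P}(\phi)$ to frequent measurement of the single nontrivially rotated stabilizer.

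First, I would construct a convenient generating set. Condition (a) of \cref{eq:conditions_on_X_without_noise} guarantees a generator $g_1$ with $\{X, g_1\} = 0$. For any other generator $g_j$ ($j \geq 2$) that also anticommutes with $X$, the product $g_1 g_j$ commutes with $X$, so replacing $g_j$ by $g_1 g_j$ produces an equivalent generating set $\{g_1, g_2, \ldots, g_{n-k}\}$ of $S(0)$ in which only $g_1$ anticommutes with $X$.

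Next, I would invoke the fact that every stabilizer commutes with the logical operator $H$ by definition of a stabilizer code, so for $j \geq 2$ the generator $g_j$ commutes with both Pauli factors in \cref{eq:rotation_unitary}. Hence $[V(\phi), g_j] = 0$ and $g_j(\phi) = g_j$ for all $\phi$. Since the initial code state is a $+1$-eigenstate of every $g_j$ and $V(\phi)$ preserves these eigenspaces, the evolving state remains a $+1$-eigenstate of $g_j(\phi) = g_j$ throughout the path, so measurements of the invariant generators return $+1$ deterministically and can be skipped. Moreover, $\{g_1(\phi), g_2, \ldots, g_{n-k}\}$ is a generating set of the rotated group $S(\phi)$, so its simultaneous $+1$-eigenspace is exactly $\mathcal{H}_0(\phi)$.

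Finally, I would apply the Zeno analysis of \cref{sec:qze} to the single time-dependent observable $g_1(\phi)$. Adiabatic projective measurements of $g_1(\phi)$ confine the state to its instantaneous $+1$-eigenspace; intersected with the automatically preserved $+1$-eigenspaces of $g_2, \ldots, g_{n-k}$, this yields exactly $\mathcal{H}_0(\phi) = V(\phi)\mathcal{H}_0(0)$, so the state traces the horizontal lift of the loop in $\mathcal{B}$ and accumulates the holonomy associated with $(V, \phi)$. The main subtlety is justifying that confinement by a single rotated stabilizer replicates the full Zeno confinement onto $\mathbb{P}(\phi)$; this is handled by writing $\mathbb{P}(\phi) = \frac{I + g_1(\phi)}{2} \prod_{j \geq 2} \frac{I + g_j}{2}$ and noting that the second factor acts as the identity on any state already respecting the invariant stabilizers, so the only nontrivial projector to implement by measurement is the one associated with $g_1(\phi)$.
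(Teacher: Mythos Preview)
Your proposal is correct and follows essentially the same route as the paper: both construct a generating set in which only $g_1$ anticommutes with $X$ (by replacing each offending $g_j$ with $g_1 g_j$), then use $[V(\phi), g_j] = 0$ for $j \geq 2$ to reduce the rotated code projector to $\mathbb{P}(\phi) = \tfrac{I + g_1(\phi)}{2}\prod_{j\geq 2}\tfrac{I+g_j}{2}$, so that only the single rotated generator needs to be measured. Your presentation adds the helpful dynamical remark that the invariant projectors act as the identity on the evolving state, which the paper leaves implicit.
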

\begin{proof}
Suppose $\langle S(0)\rangle_1 = \{g_1, g_2, ... g_{n-k}\}$ is a generator set for the stabilizer group of the unrotated code such that a subset of the generators $S_{ac}(0) = \{g_1...g_l\}$ anticommute with $X$ ($S_c(0) = \{g_{l+1}...g_{n-k}\}$ commute with $X$). It is easy to see that every odd (even) product of generators from $S_{ac}(0)$ anticommutes (commutes) with $X$. This lets us choose a different generator set such that only one of the generators anticommutes with $X$, i.e., $\langle S(0)\rangle_2 = \{g_1, g_1g_2, g_1g_3, g_1g_l, g_{l+1} ... g_{n-k}\}$. The projector onto the simultaneous $+1$ eigenspace of the rotated stabilizers $\langle S(\phi) \rangle_1$ is
\[
\begin{split}
    \mathbb{P}(\phi) = &\left(\frac{I+V(\phi)g_1 V^\dagger(\phi)}{2}\right)...  \left(\frac{I+V(\phi)g_lV^\dagger(\phi)}{2}\right). \\
                       &\left(\frac{I+g_{l+1}}{2}\right)...  \left(\frac{I+g_{n-k}}{2}\right),
\end{split}
\]
where we used the fact that $[V(\phi), g_i] = 0$ for $i=l+1...n-k$. Rearranging the terms above, we obtain
\[
\begin{split}
    \mathbb{P}(\phi) = &\left(\frac{I+V(\phi)g_1 V^\dagger(\phi)}{2}\right).\left(\frac{I+g_1g_2}{2}\right)...\left(\frac{I+g_1g_l}{2}\right) \\
                       &\left(\frac{I+g_{l+1}}{2}\right)...  \left(\frac{I+g_{n-k}}{2}\right),
\end{split}
\]
which is another expression for that same projector onto the simultaneous $+1$ eigenspace of $\langle S(\phi) \rangle_2$.
\end{proof}

The ability to obtain the desired holonomy by just measuring a single rotating stabilizer offers interesting flexibility for practical implementation. However, we can also choose to measure all the stabilizers and use the syndrome information to do real-time quantum error correction. In this paper, we show a way to correct measurement-induced errors. Similar techniques can be used to correct any correctable error.

Suppose that we measure the code space projector. It is rotated using the unitary $V(\phi)$ in increments of $\delta\phi$. After every rotation, the new projector is measured; the state (up to a normalization factor) is multiplied by the rotated projector. Starting from the code state, we want to rotate the code space projector slowly enough so that the state stays in the $+1$ eigenspace with very high probability; this state can then be parameterized in terms of $\phi$ as the \emph{instantaneous code state} $\ket{\bar{\psi}(\phi)}$. If, on the other hand, the increment $\delta\phi$ is not small enough, a projective measurement of $\mathbb{P}(\phi+\delta\phi)$ can take the state $\ket{\bar{\psi}(\phi)}$ to the null space of $\mathbb{P}(\phi+\delta\phi)$. In other words, there is a nonzero probability for the code state to ``jump'' into the error space.

Define $p_\mathrm{jump}$ as the probability of a jump at any time during the procedure. For a given value of $p_\mathrm{jump}$, we can find the rotation increment $\delta\phi$ and hence the total number of measurements needed to complete the procedure and apply the logical gate. Observe that $1-p_\mathrm{jump}$ is the probability that the measurements always return $+1$ outcomes, in which case the unnormalized final state (denoted by $\ket{\tilde{\psi}(2\pi)}$) is obtained by multiplying the original code state by the sequence of projectors:
\[\label{eq:unnormalized_codestate}
\begin{split}
\ket{\tilde{\psi}(2\pi)} &= \mathbb{P}(2\pi)\mathbb{P}(2\pi-\delta\phi)\cdots\mathbb{P}(\delta\phi)\mathbb{P}_0\ket{\bar{\psi}}\\
&= [V(2\pi)\mathbb{P}_0V^\dagger(2\pi)]\cdots [V(\delta\phi)\mathbb{P}_0 V^\dagger(\delta\phi)]\mathbb{P}_0\ket{\bar{\psi}}\\
&= G \prod_{l=1}^{\ceil{2\pi/\delta\phi}} \mathbb{P}_0 V^\dagger\big(l\delta\phi\big)V\big((l-1)\delta\phi\big)\mathbb{P}_0 \ket{\bar{\psi}} .
\end{split}
\]
The normalization constant is proportional to $1-p_\mathrm{jump}$. \cref{eq:unnormalized_codestate} contains a product of operators of the form $\mathbb{P}_0 V^\dagger(\phi)V(\phi-\delta\phi)\mathbb{P}_0$. Using \cref{eq:rotation_unitary}, we find that this is proportional to the projection onto the original code space, followed by a rotation about $H$. In particular, we have the following lemma (which we will use later):
\begin{lemma}\label{lemma:small_rot_sandwiched_P0}
For a small rotation angle $\delta\phi$,
\[\label{eq:small_rot_sandwich_P0}
    \mathbb{P}_0V^\dagger(\varphi)V(\varphi-\delta\phi)\mathbb{P}_0 = c_\varphi e^{-i\xi_\varphi H}\mathbb{P}_0,
\]
where
\begin{subequations}
    \begin{align}
        \xi_\varphi &= \frac{\theta}{2\pi}\cos(2\varphi)\delta\phi + \mathcal{O}(\delta\phi^2) \\
        c_\varphi &= 1 - \frac{\delta\phi^2}{2}\left(1 + \frac{\theta^2}{4\pi^2}\sin^2(2\varphi)\right) + \mathcal{O}(\delta\phi^3).
    \end{align}
\end{subequations}
\end{lemma}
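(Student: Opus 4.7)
The plan is to explicitly expand $V^\dagger(\varphi)V(\varphi-\delta\phi)$ as a linear combination of the operators $I$, $X$, $H$, $HX$, sandwich with $\mathbb{P}_0$, kill the $X$ and $HX$ terms using the anticommutation with a stabilizer generator, and then match the surviving $I$- and $H$-terms to $c_\varphi(\cos\xi_\varphi-i\sin\xi_\varphi H)\mathbb{P}_0$.

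\textbf{Step 1: Reduce the product.} First, substitute the definition of $V(\phi)$ from \cref{eq:rotation_unitary}. Since the two $H$-dependent exponentials commute, they combine into a single factor $\exp(-if H)$ with $f \equiv \theta\delta\phi/(2\pi)$, giving
\[
V^\dagger(\varphi)V(\varphi-\delta\phi) \;=\; e^{-i\varphi X}\, e^{-ifH}\, e^{i(\varphi-\delta\phi)X}.
\]
Because $H$ and $X$ are Pauli operators, $H^2=I$ and $X^2=I$, so the exponentials split as cosine plus imaginary sine of the generator. Using $\{X,H\}=0$ (condition (21b)), I commute $H$ through $e^{-i\varphi X}$ via the identity $e^{-i\varphi X}H = H e^{+i\varphi X}$, which follows from $X^m H = (-1)^m H X^m$.

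\textbf{Step 2: Collect $I$, $X$, $H$, $HX$ pieces.} After the move in Step 1, the product reduces to
\[
\cos f\, e^{-i\delta\phi X} \;-\; i\sin f\, H\, e^{i(2\varphi-\delta\phi)X},
\]
which expands to
\[
\begin{aligned}
&\cos f\cos(\delta\phi)\,I \;-\; i\cos f\sin(\delta\phi)\,X \\
&\quad -\; i\sin f\cos(2\varphi-\delta\phi)\,H \;+\; \sin f\sin(2\varphi-\delta\phi)\,HX.
\end{aligned}
\]
Now conjugate by $\mathbb{P}_0$. By condition (21a) there exists $g\in S(0)$ anticommuting with $X$; since $H$ is a logical operator it commutes with every stabilizer, hence with $g$; therefore both $X$ and $HX$ anticommute with $g$. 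The standard identity $\mathbb{P}_0 A \mathbb{P}_0 = 0$ for any $A$ anticommuting with some stabilizer kills the $X$ and $HX$ terms, and $\mathbb{P}_0 H = H\mathbb{P}_0$ leaves the $H$ term clean:
\[
\mathbb{P}_0 V^\dagger(\varphi)V(\varphi-\delta\phi)\mathbb{P}_0
= \bigl[\cos f\cos(\delta\phi) - i\sin f\cos(2\varphi-\delta\phi)H\bigr]\mathbb{P}_0.
\]

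\textbf{Step 3: Match and Taylor expand.} Since $H^2=I$, I can write $c_\varphi e^{-i\xi_\varphi H}\mathbb{P}_0 = c_\varphi[\cos\xi_\varphi - i\sin\xi_\varphi H]\mathbb{P}_0$, and comparing identity and $H$ coefficients gives $c_\varphi\cos\xi_\varphi = \cos f\cos(\delta\phi)$ and $c_\varphi\sin\xi_\varphi = \sin f\cos(2\varphi-\delta\phi)$. Dividing yields $\tan\xi_\varphi$, and expanding to leading order in $\delta\phi$ (hence in $f$) gives $\xi_\varphi = f\cos(2\varphi) + \mathcal{O}(\delta\phi^2) = \frac{\theta}{2\pi}\cos(2\varphi)\delta\phi + \mathcal{O}(\delta\phi^2)$. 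For $c_\varphi$, summing squares gives
\[
c_\varphi^2 = \cos^2 f\cos^2(\delta\phi) + \sin^2 f\cos^2(2\varphi-\delta\phi),
\]
which after the substitution $\cos^2(2\varphi-\delta\phi) = \cos^2(2\varphi) + \mathcal{O}(\delta\phi)$ and standard expansions simplifies to $1 - \delta\phi^2 - f^2\sin^2(2\varphi) + \mathcal{O}(\delta\phi^3)$; taking the square root reproduces the claimed form of $c_\varphi$.

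\textbf{Main obstacle.} The only nontrivial move is Step 2: justifying that $X$ (and, crucially, $HX$) is annihilated from both sides by $\mathbb{P}_0$. This hinges on the fact that $H$ is a genuine logical operator, so it commutes with every element of $S(0)$ and does not spoil the anticommutation of $X$ with the chosen stabilizer $g$. Once this is in hand, the remaining algebra is a routine Pauli expansion and Taylor series to the requested order.
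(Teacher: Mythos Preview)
Your proposal is correct and follows essentially the same route as the paper's proof in \cref{app:small_rot_sandwiched_P0_proof}: expand $V^\dagger(\varphi)V(\varphi-\delta\phi)$ into its $I$, $X$, $H$, $HX$ components, kill the $X$ and $HX$ pieces using $\mathbb{P}_0 X\mathbb{P}_0=\mathbb{P}_0 HX\mathbb{P}_0=0$, and then read off $c_\varphi$ and $\xi_\varphi$ from the surviving coefficients by the same square-sum and ratio, followed by the same Taylor expansion. Your explicit use of the commutation identity $e^{-i\varphi X}H = H e^{+i\varphi X}$ makes the reduction slightly more transparent than the paper's compressed presentation, but the substance is identical.
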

\begin{proof}
    See \cref{app:small_rot_sandwiched_P0_proof}.
\end{proof}

To find the normalized state, we multiply (from the left) the terms $\mathbb{P}_0V^\dagger(\varphi)V(\varphi-\delta\phi)\mathbb{P}_0$, where $\varphi=0\cdots 2\pi$. Note that the normalization constant $c_\varphi$ becomes irrelevant since the state after the measurement is normalized---the final state depends only on the phases $\xi_\varphi$:
\[
    \ket{\bar{\psi}(2\pi)} = G \exp\left(\sum_{l=1}^{\ceil{2\pi/\delta\phi}{}}\xi_{l\delta\phi}H\right)\ket{\bar{\psi}}.
\]
With elementary arithmetic, one can verify that
\[
    \sum_{l=1}^{\ceil{2\pi/\delta\phi}{}}\xi_{l\delta\phi} = 0,
\]
so that the final code state is indeed $G\ket{\bar{\psi}}$. We have the following theorem for the success probability:
\begin{theorem}
    Starting with a system in a code state, by measuring the code space projectors, rotated by small increments of $\delta\phi$, the probability that the state stays in the rotated code space throughout the loop is
    \[
        1 - p_\mathrm{jump} = \exp\left[-\frac{\delta\phi}{2\pi}\left(\frac{\theta^2}{2} + 4\pi^2\right)\right] + \mathcal{O}(\delta \phi^2).
    \]
\end{theorem}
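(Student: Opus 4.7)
The plan is to compute $1 - p_{\text{jump}}$ as the squared norm of the unnormalized state $\ket{\tilde{\psi}(2\pi)}$ from \cref{eq:unnormalized_codestate}. First I would apply \cref{lemma:small_rot_sandwiched_P0} to each of the $N = \lceil 2\pi/\delta\phi \rceil$ factors of the form $\mathbb{P}_0 V^\dagger(l\delta\phi)V((l-1)\delta\phi)\mathbb{P}_0$, turning the product into $G \prod_{l=1}^{N} c_{l\delta\phi}\, e^{-i\xi_{l\delta\phi} H}\mathbb{P}_0 \ket{\bar{\psi}}$. Because $H$ is logical, $[H, \mathbb{P}_0] = 0$, so all the exponentials and projectors can be collected together; using $\mathbb{P}_0 \ket{\bar{\psi}} = \ket{\bar{\psi}}$ and the identity $\sum_l \xi_{l\delta\phi} = 0$ already established above, the unnormalized final state reduces to $G \left(\prod_l c_{l\delta\phi}\right)\ket{\bar{\psi}}$. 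Since $G$ is unitary, its squared norm is simply $\prod_l c_{l\delta\phi}^2$.

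Next I would take the logarithm of this product and plug in the expansion $c_\varphi^2 = 1 - \delta\phi^2\bigl(1 + \tfrac{\theta^2}{4\pi^2}\sin^2(2\varphi)\bigr) + \mathcal{O}(\delta\phi^3)$. Using $\ln(1-x) = -x + \mathcal{O}(x^2)$, each term contributes $-\delta\phi^2(1 + \tfrac{\theta^2}{4\pi^2}\sin^2(2l\delta\phi)) + \mathcal{O}(\delta\phi^3)$. Summing over $l$ and recognizing the result as $\delta\phi$ times a Riemann sum for the integrand $f(\varphi) = 1 + \tfrac{\theta^2}{4\pi^2}\sin^2(2\varphi)$ on $[0, 2\pi]$, I can replace the sum by $\delta\phi \int_0^{2\pi} f(\varphi)\,\mathrm{d}\varphi$ at the cost of $\mathcal{O}(\delta\phi^2)$ corrections (and the $N \cdot \mathcal{O}(\delta\phi^3)$ remainders also collapse to $\mathcal{O}(\delta\phi^2)$).

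Finally, evaluating $\int_0^{2\pi}\!\bigl(1 + \tfrac{\theta^2}{4\pi^2}\sin^2(2\varphi)\bigr)\mathrm{d}\varphi = 2\pi + \tfrac{\theta^2}{4\pi}$ gives
\[
    \ln(1 - p_{\text{jump}}) = -\delta\phi\!\left(2\pi + \frac{\theta^2}{4\pi}\right) + \mathcal{O}(\delta\phi^2) = -\frac{\delta\phi}{2\pi}\!\left(4\pi^2 + \frac{\theta^2}{2}\right) + \mathcal{O}(\delta\phi^2),
\]
and exponentiating yields the claimed expression.

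The main obstacle is bookkeeping the error orders consistently: I need to verify that both the Taylor expansion of $\ln(1-x)$ and the Riemann-sum approximation each introduce only $\mathcal{O}(\delta\phi^2)$ error after summation, so that the final $\mathcal{O}(\delta\phi^2)$ claim survives. A secondary subtlety is the mismatch between $N\delta\phi$ and $2\pi$ implied by the ceiling in \cref{eq:unnormalized_codestate}; I would handle this by noting that the discrepancy contributes at most a single factor of $c_\varphi$, which perturbs $1 - p_{\text{jump}}$ only at order $\delta\phi^2$ and is therefore absorbed into the stated error term.
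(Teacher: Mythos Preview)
Your proposal is correct and follows essentially the same route as the paper: identify $1-p_{\mathrm{jump}}$ with $\bigl(\prod_l c_{l\delta\phi}\bigr)^2$ via \cref{lemma:small_rot_sandwiched_P0}, expand to leading order in $\delta\phi$, and recognize the result as an exponential. The paper's proof is terser---it jumps directly from the product to the first-order expansion without writing out the logarithm/Riemann-sum step or the error bookkeeping---so your version is just a more detailed execution of the same argument. (One minor remark: you do not actually need $\sum_l\xi_{l\delta\phi}=0$ to compute the norm, since each $e^{-i\xi_{l\delta\phi}H}$ is unitary; the phase identity is only needed for the \emph{state}, not its norm.)
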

\begin{proof}
    The probability that the state does not undergo a jump during the holonomic procedure is equal to the square of the normalization factor in $\ket{\tilde{\psi}(2\pi)}$. Using \cref{lemma:small_rot_sandwiched_P0}, we obtain
    \[
    \begin{split}
        1 - p_\mathrm{jump} &= \left(\prod_{l=1}^{\ceil{2\pi/\delta\phi}}c_{l\delta\phi}\right)^2 \\
        &= 1 - \frac{\delta\phi}{2\pi}\left(\frac{\theta^2}{2} + 4\pi^2\right) + \mathcal{O}(\delta\phi^2)\\
        &\approx \exp\left[-\frac{\delta\phi}{2\pi}\left(\frac{\theta^2}{2} + 4\pi^2\right)\right].
    \end{split}
    \]
\end{proof}

\begin{example}
    Suppose that we have the $[\![3, 1, 3]\!]$ bit-flip code with the code space $\mathcal{C} = \mathrm{span}\left\{\left(\ket{000}, \ket{111}\right)\right\}$, and the stabilizer generators $g_1 = \sigma_1^z\sigma_2^z, \; g_2 = \sigma_2^z\sigma_3^z$. Suppose further that we want to apply a logical $Z$ rotation. One unitary that can perform this is $G = \exp(i\theta \sigma_1^z\sigma_2^z\sigma_3^z)$, so $H=\sigma_1^z\sigma_2^z\sigma_3^z$. By the conditions mentioned in \cref{eq:conditions_on_X_without_noise}, we can choose $X = \sigma_3^x$, which anticommutes with both $H$ and the stabilizer generator $g_2$. With this choice, the stabilizer generator $g_1$ is unchanged, since it commutes with both $X$ and $H$. But the other generator $g_2(\phi)$ rotates nontrivially, as
    \[
    \begin{split}
        g_2(\phi) &= \cos(2\phi)g_2 - \sin(2\phi)\sin\left(\frac{\theta}{\pi}\phi\right)HXg_2 +\\
        &\qquad i\sin(2\phi)\cos\left(\frac{\theta}{\pi}\phi\right)Xg_2.
    \end{split}
    \]
    This operator still has eigenvalues $\pm 1$ and squares to the identity. Starting from a $+1$-eigenstate of $g_2$: $\ket{\bar{\psi}}$ and successively measuring $g_2(\phi)$ in small increments of $\delta\phi$ should, with high probability, always return the $+1$ outcome. The state after the final measurement of $g_2(2\pi)$ is $G\ket{\bar{\psi}}$.
\end{example}

If at any point during the procedure with $0 \leq \zeta \leq 2\pi$ we measured the projector $\mathbb{P}(\zeta)$ and got the result $0$; the state would then have jumped to $\mathrm{ker}\{\mathbb{P}(\zeta)\}$. The \emph{instantaneous error state} depends on $\zeta$, and it also includes a partial rotation about $H$, i.e., a logical error. In the following section, we show how to find this state and correct the logical error.

\subsection{Implicit fault-tolerance}\label{sec:fault_tolerance}

\begin{figure}
    \centering
    \includegraphics[width=\linewidth]{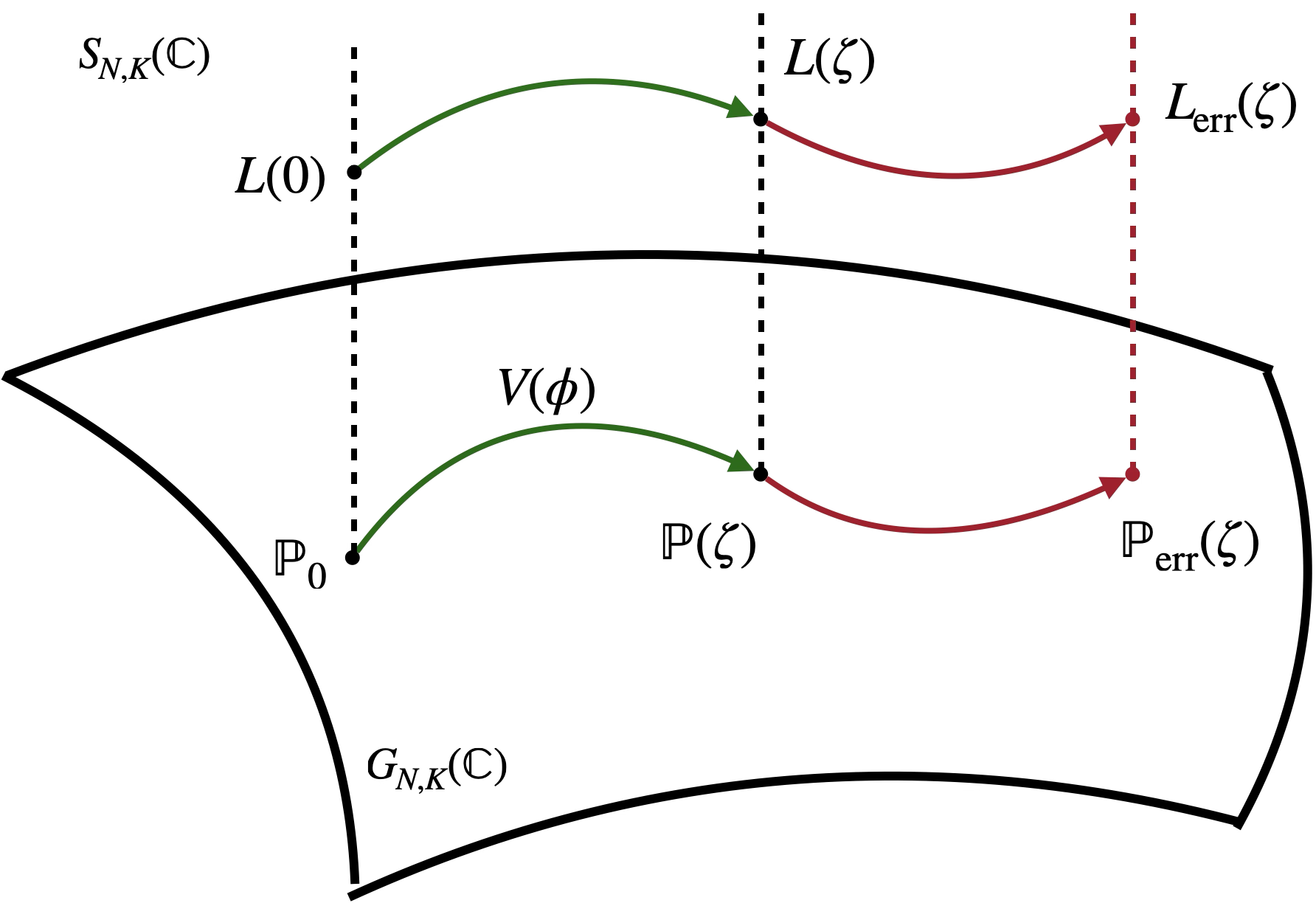}
    \caption{Jump into the rotated error space. When the code space is not rotated adiabatically, the state can jump into an error space defined by the operator $X$. The error state can potentially include a logical error.}
    \label{fig:mhqc_jump}
\end{figure}
During the holonomic evolution, if the stabilizer generators are not rotated slowly enough, the state can erroneously jump to the \emph{rotated} error space as illustrated in \cref{fig:mhqc_jump}. We find that the \emph{error state} following the jump contains a logical error. Using this information, we can correct the logical error and still achieve the desired holonomy by dynamically modifying the path. In this section, we do not consider the effects of external noise, i.e., the errors arise solely from measurement-induced jumps. For simplicity, we present a method for constructing a correction path that can correct an error due to a single jump. This construction can, in principle, be generalized to accommodate an arbitrary number of jumps. To begin, we first determine the rotated code state immediately before the jump:
\begin{proposition}\label{prop:instant_code_state}
    The instantaneous code state with the holonomic path $(V, \phi)$ is 
    \[\label{eq:inst_code_state}
        \ket{\bar{\psi}(\phi)} = V(\phi) \exp\left(-i \frac{\theta}{4\pi}\sin(2\phi) H\right)\ket{\bar{\psi}}.
    \]
\end{proposition}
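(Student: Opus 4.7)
The instantaneous code state $\ket{\bar{\psi}(\phi)}$ is, by definition, the normalization of the state obtained by truncating the product in \cref{eq:unnormalized_codestate} at angle $\phi$, conditioned on every intermediate measurement returning $+1$. The plan is to evaluate that truncated product in closed form by applying \cref{lemma:small_rot_sandwiched_P0} repeatedly, take the continuous limit $\delta\phi\to 0$, and then normalize.

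Concretely, I would first substitute $\mathbb{P}(l\delta\phi) = V(l\delta\phi)\mathbb{P}_0 V^\dagger(l\delta\phi)$ into the truncated version of \cref{eq:unnormalized_codestate}, which factors $V(\phi)$ out on the far left and leaves a product of ``sandwich'' operators of the form $\mathbb{P}_0 V^\dagger(l\delta\phi)V((l-1)\delta\phi)\mathbb{P}_0$, together with a trailing $\mathbb{P}_0\ket{\bar{\psi}} = \ket{\bar{\psi}}$. Here I use idempotency $\mathbb{P}_0^2 = \mathbb{P}_0$ to legitimately group consecutive operators into the sandwich form required by the lemma. Applying \cref{lemma:small_rot_sandwiched_P0} to each such factor replaces it with $c_{l\delta\phi}\,e^{-i\xi_{l\delta\phi}H}\mathbb{P}_0$.

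Next I would use the fact that $H$ is a logical operator and therefore commutes with every stabilizer generator, hence with $\mathbb{P}_0$. This lets me push all the code projectors through the exponentials and collapse the product into a single exponential with an additive phase, yielding
\[
\ket{\tilde{\psi}(\phi)} = V(\phi)\Biggl[\prod_{l=1}^{\phi/\delta\phi} c_{l\delta\phi}\Biggr]\exp\!\Biggl(-i\sum_{l=1}^{\phi/\delta\phi}\xi_{l\delta\phi}\,H\Biggr)\ket{\bar{\psi}}.
\]
In the continuum limit, the Riemann sum in the exponent converges to
\[
\sum_{l}\xi_{l\delta\phi}\;\longrightarrow\;\int_0^{\phi}\frac{\theta}{2\pi}\cos(2\varphi)\,\mathrm{d}\varphi = \frac{\theta}{4\pi}\sin(2\phi),
\]
with the $\mathcal{O}(\delta\phi^2)$ remainder from \cref{lemma:small_rot_sandwiched_P0} contributing $\mathcal{O}(\delta\phi)$ to the total and thus vanishing. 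Finally, normalization kills the product of $c_{l\delta\phi}$ prefactors (they are positive real and state-independent), producing precisely \cref{eq:inst_code_state}. A sanity check: setting $\phi = 2\pi$ gives $\sin(4\pi) = 0$, consistent with the earlier observation that $\sum_l \xi_{l\delta\phi} = 0$ around the full loop, so that $\ket{\bar{\psi}(2\pi)} = V(2\pi)\ket{\bar{\psi}} = G\ket{\bar{\psi}}$.

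The main obstacle is the bookkeeping around the $\mathbb{P}_0$ insertions and the telescoping: one must be careful that the grouping into sandwich operators is exactly $\phi/\delta\phi$ copies (one per step), with no leftover $V^\dagger$ on the right other than $V^\dagger(0)=I$. Once the telescoping is set up correctly, the use of $[H,\mathbb{P}_0]=0$ to merge exponentials and the passage to the integral are both straightforward.
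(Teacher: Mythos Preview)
Your proposal is correct and matches, essentially line for line, the paper's \emph{alternate} proof given in \cref{sec:inst_estate}: both truncate the projector product, invoke \cref{lemma:small_rot_sandwiched_P0} on each sandwich factor, use $[H,\mathbb{P}_0]=0$ to merge the exponentials, and pass to the Riemann integral $\int_0^\phi \tfrac{\theta}{2\pi}\cos(2\varphi)\,\mathrm{d}\varphi=\tfrac{\theta}{4\pi}\sin(2\phi)$.

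The paper's \emph{primary} proof, however, takes a different route: it works in the fiber-bundle formalism of \cref{sec:prelims}, solving the horizontal-lift ODE \cref{eq:lift_deq} directly. One computes $V^\dagger\tfrac{\mathrm{d}V}{\mathrm{d}\phi}$ and sandwiches it between $L^\dagger(0)$ and $L(0)$; the key simplification is that $L^\dagger(0)XL(0)=L^\dagger(0)HXL(0)=0$ (since $X$ anticommutes with a stabilizer, so $XL(0)$ lies in an orthogonal subspace), which kills all but the $\tfrac{\theta}{2\pi}\cos(2\phi)H$ term in the generator. The resulting ODE has commuting right-hand sides and integrates to the same exponential. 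Your approach is more elementary---it avoids the connection/horizontal-lift machinery and stays close to the discrete measurement picture---while the paper's main proof is more intrinsically geometric and generalizes more readily to other path operators; indeed, the same horizontal-lift ODE technique is reused verbatim when analyzing the modified path $W(\phi)$ after a jump.
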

\begin{proof}
    If the measurements of the instantaneous code space projectors always returned the outcomes $+1$, then the state always stays in their $+1$-eigenspaces. The curve on $\mathcal{B}$ can be written as
    \[
        \mathbb{P}(\phi) = V(\phi)\mathbb{P}_0V^\dagger(\phi),
    \]
    and the associated curve on $\mathcal{F}$ can be found by solving the differential equation \cref{eq:horizontal_lift}:
    \[\label{eq:lift_deq_path_V}
        \frac{\mathrm{d}h(\phi)}{\mathrm{d}\phi} = -L^\dagger(0)\left(V^\dagger(\phi)\frac{\mathrm{d}V(\phi)}{\mathrm{d}\phi}\right)L(0)h(\phi).
    \]
    Substituting \cref{eq:rotation_unitary} into \cref{eq:lift_deq_path_V}, we obtain
    \[
    \begin{split}
    \label{eq:deq_lift}
        \frac{\mathrm{d}h(\phi)}{\mathrm{d}\phi} &= -iL^\dagger(0)\left(\frac{\theta}{2\pi}H e^{i2\phi X} + X\right)L(0)h(\phi) \\
        & = -i\frac{\theta}{2\pi}\cos(2\phi)L^\dagger(0)HL(0)h(\phi) \\
        & \quad +\frac{\theta}{2\pi}\sin(2\phi)L^\dagger(0)HXL(0)h(\phi) \\
        & \quad -iL^\dagger(0)XL(0)h(\phi).
    \end{split}
    \]
    The columns of $L(0)$ represent the basis states of the original code space, i.e., they are code words. Since $X$ anticommutes with a stabilizer, we can treat $X$ as an ``error'' operator. Therefore, $XL(0)$ and $HXL(0)$ are the basis states of the orthogonal subspace, making $L^\dagger(0)XL(0) = L^\dagger(0)HXL(0) = 0$. This simplifies \cref{eq:deq_lift} to
    \[
    \begin{split}
    \label{eq:deq_lift_simplified}
        \frac{\mathrm{d}h(\phi)}{\mathrm{d}\phi} &= -i\frac{\theta}{2\pi}\cos(2\phi)L^\dagger(0)HL(0)h(\phi) \\
        &= -i F(\phi)h(\phi),
    \end{split}        
    \]
    where $F(\phi) \equiv (\theta/2\pi)\cos(2\phi)L^\dagger(0)HL(0)$. Since $[F(\phi), F(\phi')]=0 \; \forall \phi, \phi'$, the solution to \cref{eq:deq_lift_simplified} simplifies from a time-ordered exponential to a regular exponential:
    \[
    \begin{split}
        h(\phi) &= \exp\left(-i\frac{\theta}{2\pi}L^\dagger(0)HL(0)\int_{\phi'=0}^{\phi}\cos(2\phi')\mathrm{d}\phi'\right)h(0)\\
        &= \exp\left(-i\frac{\theta}{4\pi}\sin(2\phi)L^\dagger(0)HL(0)\right)h(0).
    \end{split}
    \]
    Without loss of generality, we can assume that the initial horizontal lift is trivial: 
    \[
    h(0) = I_{2^k}.
    \]
    Since $L^\dagger(0)L(0)=I_{2^k}$, we have 
    \[
        L^\dagger(0)e^AL(0) = \exp[L^\dagger(0)AL(0)]
    \]for any operator $A$. Then
    \[
        h(\phi) = L^\dagger(0)\exp\left(-i\frac{\theta}{4\pi}\sin(2\phi)H\right)L(0),
    \]
    and the curve $L(\phi)$ can be written as 
    \[
        L(\phi) = V(\phi)\exp\left(-i\frac{\theta}{4\pi}\sin(2\phi)H\right)L(0),
    \]
    where we used the fact that $L(0)L^\dagger(0) = \mathbb{P}_0$ commutes with $H$. Using \cref{eq:final_state_with_vertical_lift}, the state at $\phi$ can be written
    \[
        \ket{\bar{\psi}(\phi)} = V(\phi)\exp\left(-i\frac{\theta}{4\pi}\sin(2\phi)H\right)\ket{\bar{\psi}}.
    \]
    An alternative proof of this result is given in \cref{sec:inst_estate}.
\end{proof}
This is the $+1$-eigenstate of the projector onto the rotated code space. When there is no jump to the error space throughout the protocol, we obtain $\ket{\bar\psi(2\pi)} = G\ket{\bar{\psi}}$, and the code state undergoes the desired logical gate. However, with probability $p_\mathrm{jump}$, the state jumps into the error space. In the absence of external noise, the system state can only evolve into the ``error'' space  defined by the operator $X$, with the projector
\[\label{eq:error_space}
    \mathbb{P}_\mathrm{err}(\zeta) = V(\zeta)X\mathbb{P}_0XV^\dagger(\zeta).
\]
Moreover, the error operator that moves an instantaneous code state to the instantaneous error state is unitary:
\begin{proposition}
    In the absence of external noise, the sole source of error arises from the non-adiabatic rotations of the stabilizer generators; the resulting phase-dependent error is
    \[
        E(\zeta) = V(\zeta)Xe^{-i\chi(\zeta)H}V^\dagger(\zeta),
    \]
    where
    \[
     \chi(\zeta) = \mathrm{arctan}\left(\frac{\theta}{2\pi}\sin(2\zeta)\right).
    \]
\end{proposition}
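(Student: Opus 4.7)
The plan is to compute the normalized post-jump state explicitly and identify it with $E(\zeta)\ket{\bar\psi(\zeta)}$. Since the instantaneous code state of \cref{prop:instant_code_state} already lies in $\mathbb{P}(\zeta)\mathcal{H}$, one has $\mathbb{P}_{\mathrm{err}}(\zeta)\ket{\bar\psi(\zeta)} = 0$, so the direction of the jump is only visible at first order in the step size $\delta\phi$: it is the normalized limit of $\mathbb{P}_{\mathrm{err}}(\zeta)\ket{\bar\psi(\zeta-\delta\phi)}$. Accordingly I would Taylor-expand $\ket{\bar\psi(\zeta-\delta\phi)} = V(\zeta-\delta\phi)\exp(-ia(\zeta-\delta\phi)H)\ket{\bar\psi}$ with $a(\zeta) = \tfrac{\theta}{4\pi}\sin(2\zeta)$, using the derivative formula $V^\dagger dV/d\phi = i(\tfrac{\theta}{2\pi}He^{2i\phi X} + X)$ from the proof of \cref{prop:instant_code_state} along with $a'(\zeta) = \tfrac{\theta}{2\pi}\cos(2\zeta)$.

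Next I would apply $\mathbb{P}_{\mathrm{err}}(\zeta) = V(\zeta)X\mathbb{P}_0 XV^\dagger(\zeta)$ to the expansion, expand $e^{2i\zeta X} = \cos(2\zeta)+i\sin(2\zeta)X$, and use $\{X,H\}=0$, $X^2=H^2=I$, and $[H,\mathbb{P}_0]=0$ to decompose each term into its code-space and error-space components. Since $X\mathbb{P}_0 X$ is the identity on the error space and zero on the code space, only the error-space components survive. The critical step is verifying that the two code-space contributions proportional to $He^{-iaH}\ket{\bar\psi}$---one from the logical derivative $a'H$ and one from the logical piece of the rotation generator $\tfrac{\theta}{2\pi}\cos(2\zeta)H$---cancel exactly, which happens precisely because $a'(\zeta) = \tfrac{\theta}{2\pi}\cos(2\zeta)$ (i.e., the horizontal-lift condition solved in \cref{prop:instant_code_state}).

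After this cancellation and using $e^{iaH}X = Xe^{-iaH}$, the result takes the form
\[
\mathbb{P}_{\mathrm{err}}(\zeta)\ket{\bar\psi(\zeta-\delta\phi)} \propto V(\zeta)Xe^{-iaH}\bigl[I - i\tan\chi\,H\bigr]\ket{\bar\psi},
\]
with $\tan\chi = \tfrac{\theta}{2\pi}\sin(2\zeta)$. Invoking the Pauli-rotation identity $I - i\tan\chi\,H = \sec\chi\cdot e^{-i\chi H}$ (valid since $H^2 = I$) puts the normalized post-jump state in the form $V(\zeta)Xe^{-i(a+\chi)H}\ket{\bar\psi}$, which coincides with $E(\zeta)\ket{\bar\psi(\zeta)} = V(\zeta)Xe^{-i\chi H}V^\dagger(\zeta)\cdot V(\zeta)e^{-iaH}\ket{\bar\psi}$ up to an overall phase, and forces $\chi(\zeta) = \arctan(\tfrac{\theta}{2\pi}\sin(2\zeta))$.

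The main obstacle I anticipate is the careful bookkeeping of the first-order expansion---routing each contribution to the correct subspace and showing that the $O(1)$ code-space pieces cancel so that the projection has a well-defined limiting direction. Once that cancellation is established, the arctangent emerges cleanly from the Pauli-rotation identity, and the unitarity of $E(\zeta)$ is manifest from its construction as a conjugate of $Xe^{-i\chi H}$.
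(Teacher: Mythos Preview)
Your proposal is correct and follows essentially the same route as the paper: project the pre-jump state $\ket{\bar\psi(\zeta-\delta\phi)}$ onto the error space at $\zeta$, expand to leading order in $\delta\phi$, and read off the Pauli-rotation $e^{-i\chi H}$ from the surviving error-space piece. The paper organizes the computation slightly differently---it evaluates the transfer operator $(I-\mathbb{P}_0)V^\dagger(\zeta)V(\zeta-\delta\phi)\mathbb{P}_0$ exactly and only then takes $\delta\phi\to 0$, and it uses $I-\mathbb{P}(\zeta)$ rather than $\mathbb{P}_{\mathrm{err}}(\zeta)$ (remarking at the end that your choice gives the same answer)---but the content is the same.

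One small correction: the ``critical cancellation'' you highlight between the $a'(\zeta)H$ term and the $\tfrac{\theta}{2\pi}\cos(2\zeta)H$ term is not actually needed. Both contributions are code-space vectors (proportional to $He^{-iaH}\ket{\bar\psi}\in\mathbb{P}_0\mathcal{H}$), so the error-space projector $X\mathbb{P}_0X$ annihilates each of them individually, regardless of whether they cancel. The identity $a'=\tfrac{\theta}{2\pi}\cos(2\zeta)$ is the horizontal-lift condition and explains why $\ket{\bar\psi(\zeta)}$ stays in the code space, but it plays no role in determining the direction of the jump.
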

\begin{proof}
Suppose that the rotation at $\zeta$ with $0 \leq \zeta \leq 2\pi$ resulted in an error, thereby projecting the state into the error space $V(\zeta)(I-\mathbb{P}_0)V^\dagger(\zeta)$. The instantaneous error state can be found by multiplying the original code state through the sequence of rotated projectors until $\zeta-\delta\phi$, followed by the orthogonal complement at $\zeta$:
\[
\begin{split}
    \ket{\psi_\mathrm{err}(\zeta)} &= (I-\mathbb{P}(\zeta))\cdot \mathbb{P}(\zeta-\delta\phi)\cdots\mathbb{P}_0\ket{\bar{\psi}}\\
        &= [V(\zeta)(I-\mathbb{P}_0) V^\dagger(\zeta)]\cdots[V(\delta\phi)\mathbb{P}_0V^\dagger(\delta\phi)]\ket{\bar{\psi}}.
\end{split}
\]
Until $\zeta-\delta\phi$, the state undergoes the error-free evolution described in \cref{eq:inst_code_state}. At $\zeta$, the instantaneous code state $\ket{\bar{\psi}(\zeta-\delta\phi)}$ is multiplied by
\begin{subequations}
\begin{align}
    &(I - \mathbb{P}_0) V^\dagger(\varphi) V(\varphi - \delta\phi) \mathbb{P}_0 = \notag \\
    &\quad\qquad \sin\left(\frac{\theta}{2\pi} \delta\phi\right) \sin(2\zeta - \delta\phi) H X \mathbb{P}_0 \\
    &\qquad\qquad\qquad- i \cos\left(\frac{\theta}{2\pi} \delta\phi\right) \sin(\delta\phi) X \mathbb{P}_0 \\
    &\quad\equiv d(\zeta) e^{i\xi(\zeta) H} X \mathbb{P}_0.
\end{align}
\end{subequations}
The normalization constant is given by
\[
\begin{split}
    d(\zeta) &= 
    \left[
        \sin^2\left(\frac{\theta}{2\pi} \delta\phi\right) \sin^2(2\zeta-\delta\phi) + \right. \\
        & \left. \qquad\qquad\cos^2\left(\frac{\theta}{2\pi} \delta\phi\right) \sin^2(\delta\phi)
    \right]^{1/2},
    \end{split}
\]
and the phase $\xi(\zeta)$ follows
\[
\begin{split}
    \tan(\xi(\zeta)) &= \frac{\sin\left(\frac{\theta}{2\pi} \delta\phi\right) \sin(2\zeta - \delta\phi)}{\cos\left(\frac{\theta}{2\pi} \delta\phi\right) \sin(\delta\phi)} \\
    &= \frac{\theta}{2\pi}\sin(2\zeta-\delta\phi)+\mathcal{O}(\delta\phi^2).
\end{split}
\]
The evolved state in the error space at $\zeta$ is
    \[
    \begin{split}
        \ket{\psi_\mathrm{err}(\zeta)} &= V(\zeta)X\exp\left(-i\arctan\left[\frac{\theta}{2\pi}\sin(2\zeta-\delta\phi)\right]H\right) \\
        &\qquad\qquad\exp\left(i\frac{\theta}{2\pi}\delta\phi H\right)\exp\left(-i\frac{\theta}{4\pi}\sin(2\zeta)H\right)\ket{\bar{\psi}}\\
        &= \left[V(\zeta)\exp(i\chi(\zeta) H)X V^\dagger(\zeta)\right] \ket{\psi(\zeta)} + \mathcal{O}(\delta\phi),
    \end{split}
\]
with $\chi(\zeta) \equiv \arctan\left(\frac{\theta}{2\pi}\sin(2\zeta)\right)$ in the limit $\delta\phi \rightarrow 0$. Then we have the following error operator at $\zeta$:
\[\label{eq:error_op}
    E(\zeta) = V(\zeta)\exp(i\chi(\zeta) H)X V^\dagger(\zeta).
\]
While we used the projector $I-\mathbb{P}(\zeta)$ to denote the orthogonal subspace when there is a jump at $\zeta$, the projector can also be obtained from \cref{eq:error_space}. One can verify that the error state obtained will be the same.
\end{proof}

Assume that the error has occurred at $\phi = \zeta$. Then the state in the error space is
\[\label{eq:state_after_jump}
\begin{split}
    \ket{\psi_\mathrm{err}(\zeta)} &= E(\zeta)\ket{\bar{\psi}(\zeta)} \\
    &= V(\zeta)\exp\left(i\theta_\mathrm{err}(\zeta)H\right))X\ket{\bar{\psi}},
\end{split}
\]
where the erroneous rotation angle $\theta_\mathrm{err}(\phi)$ is defined as
\[
    \theta_\mathrm{err}(\phi) \equiv \frac{\theta}{4\pi}\sin(2\phi) + \chi(\phi).
\]
The measurement-induced jump introduces a ``fault'' through a two-step process: an application of $X$, followed by a rotation of the code state about $H$ by $\theta_\mathrm{err}(\zeta)$, introducing a logical error. This fault is a direct consequence of non-adiabatic rotation of the code space; achieving fault-tolerance then requires suppressing the probability of these faults. While slow rotation can reduce this probability, it lengthens the protocol and may potentially diminish any quantum advantage. Alternatively, since we know the state after the jump, we can perform error correction. This can be done either by reversing both the $X$ operation and the logical error, or by correcting only the error while tracking the Pauli frame. In the latter case, subsequent computation proceeds within the error space. Since we are continuously measuring the stabilizers throughout this protocol, we can detect a jump by tracking the expectation of the stabilizer that anticommutes with $X$---its value will flip to $-1$ when the jump occurs. We can then correct this error simply by restarting the holonomic procedure after the jump, but with a different path.
\begin{figure}
    \centering
    \includegraphics[width=\linewidth]{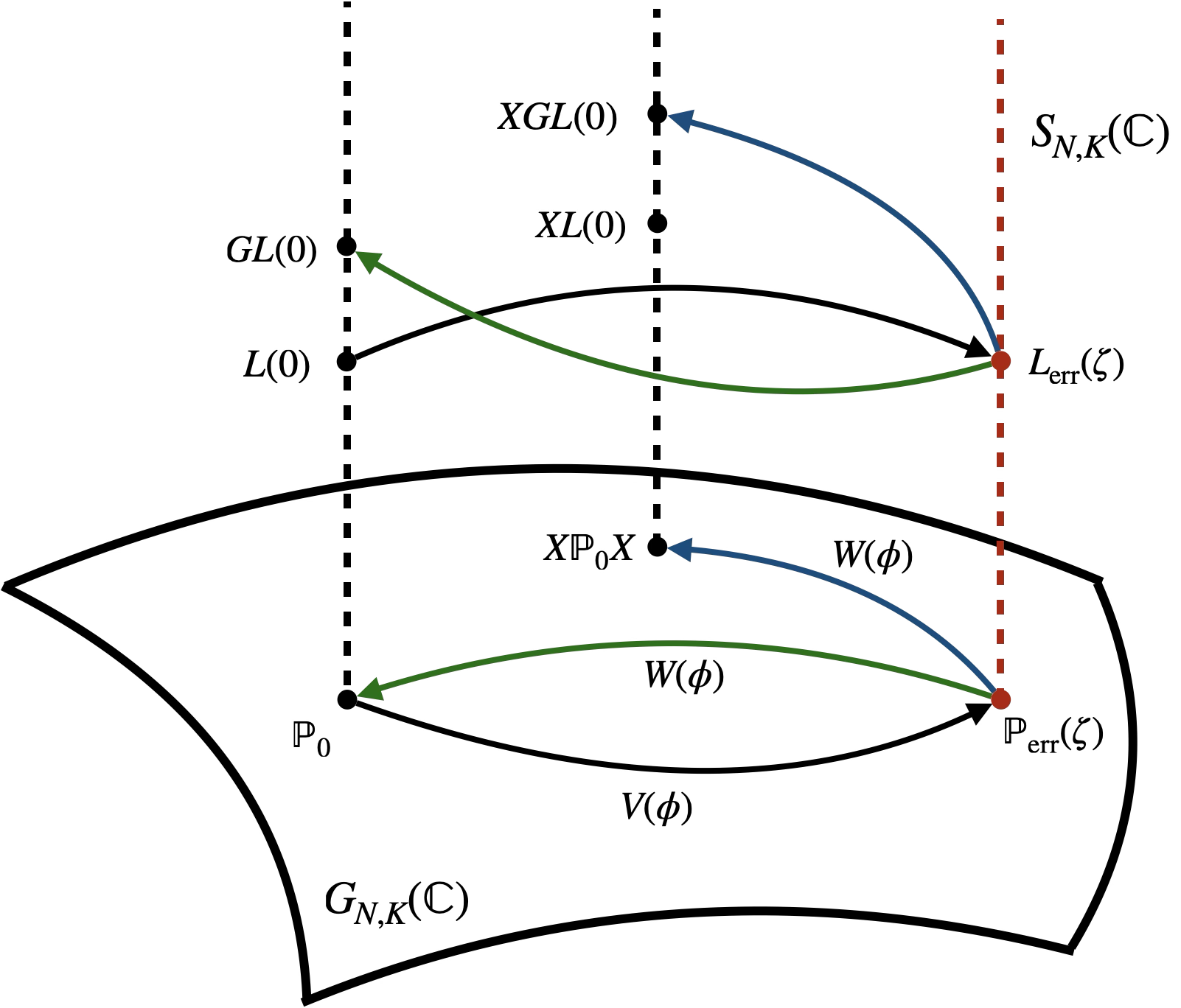}
    \caption{Real-time error correction by dynamic path modulation. When non-adiabatic evolution results in a measurement-induced error, it can be corrected by changing the holonomic path to either $(W, 7\pi/2-\zeta)$ or $(W, 4\pi-\zeta)$. In the former case, the state returns to the original code space with the desired holonomy, denoted by the green curves. In the latter case, the state returns to the original error space with the emulated holonomy, denoted by the blue curves. The operator $X$ can be corrected or stored in a Pauli frame and the subsequent computation can be performed.}
    \label{fig:error}
\end{figure}

\begin{theorem}\label{thm:new_path}
Suppose that the holonomic procedure along the path $(V, \zeta)$ resulted in a measurement error. Define a family of unitary operators parameterized by $\phi$:
\[\label{eq:new_path_op}
    W(\phi) = \exp\left(-i\tilde{\theta}\frac{\phi}{2\pi}H\right)V(\phi+\zeta)V^\dagger(\zeta).
\]
By modifying the path to $(W, 7\pi/2 - \zeta)$ with
\[\label{eq:new_angle_codespace}
    \tilde{\theta} = \frac{3\pi\theta - 4\pi\theta_{\mathrm{err}}(\zeta) - \theta\sin(2\zeta)}{7\pi - 2\zeta - \sin(2\zeta)},
\]
and restarting the holonomic procedure, the resulting state is in the original code space with the desired holonomy. Alternatively, modifying the path to $(W, 4\pi-\zeta)$ with
\[\label{eq:new_angle_errorspace}
    \tilde{\theta} = \frac{12\pi\theta + 4\pi\theta_{\mathrm{err}}(\zeta) + \theta\sin(2\zeta)}{8\pi - 2\zeta + \sin(2\zeta)}
\]
results in a state is in the original error space with the desired emulated holonomy: $\ket{\psi(4\pi-\zeta)}= XG\ket{\bar{\psi}}$.
\end{theorem}
\begin{proof}
    When a jump occurs at $\zeta$, the state is projected to a subspace with projector $\mathbb{P}_\mathrm{err}(\zeta)$. Let the path be modified to $W(\phi)$. It is easy to see that $W(0)=I$, so at the initial rotation angle \emph{after} the jump, $W(0)\mathbb{P}_\mathrm{err}(\zeta)W^\dagger(0) = \mathbb{P}_\mathrm{err}(\zeta)$. Moreover, $W(\phi+\delta\phi)W^\dagger(\phi) = I+\mathcal{O}(\delta\phi)$, i.e., the rotation is weak. We would like to run the procedure for at least $2\pi$, so that the state $\ket{\psi_\mathrm{err}(\zeta)}$ is brought back to the desired code space $\mathbb{P}_0$ (or to the error space $X\mathbb{P}_0X$) with a jump probability of at most $p_\mathrm{jump}$. We restart the procedure, resetting the parameter to $\phi=0$, and run until $\phi=7\pi/2-\zeta$ (respectively, $\phi = 4\pi-\zeta$) to bring the state back to the code (error) space. At the final rotation angle,
    \[
        W(7\pi/2-\zeta) = -i\exp\left[i\left(\frac{7}{4}(\theta - \tilde{\theta}) + \frac{\tilde{\theta}\zeta}{2\pi}\right)H\right]XV^\dagger(\zeta),
    \]
    or
    \[
        W(4\pi-\zeta) = \exp\left(-i\tilde{\theta}\left(\frac{4\pi-\zeta}{2\pi}\right)H\right)\exp\left(i2\theta H\right)V^\dagger(\zeta).
    \]
    If the state is brought back to the error space, the unitary transformation is not strictly a holonomy; it emulates the effective holonomy in the error space.

    Let the curve after the jump be represented by $L'(\phi)$. From \cref{eq:state_after_jump}, we can write 
    \[
    L'(0) = L_\mathrm{err}(\zeta) \equiv V(\zeta)X\exp\left(-i\theta_\mathrm{err}(\zeta)H\right)L(0).
    \]
    Using the path $W(\phi)$, this curve can be written as
    \[
        L'(\phi) = W(\phi)L'(0)h'(\phi).
    \]
    The dynamics of the horizontal lift follow:
    \[
    \begin{split}
        \frac{\mathrm{d}h'(\phi)}{\mathrm{d}\phi} =& -L'^\dagger(0)W^\dagger(\phi)\frac{\mathrm{d}W(\phi)}{\mathrm{d}\phi}L'(0)h'(\phi) \\
        =& -L^\dagger(0)\exp(i\theta_\mathrm{err}(\zeta)H) \\ &\times X\left(\tilde{V}^\dagger(\phi)\frac{\mathrm{d}\tilde{V}(\phi)}{\mathrm{d}\phi}\right)X\\
        &\times \exp(-i\theta_\mathrm{err}(\zeta)H)L(0)h'(\phi),
    \end{split}
    \]
    where
    \[
        \tilde{V}(\phi) \equiv \exp\left(i\frac{1}{2\pi}(\theta\zeta + (\theta-\tilde{\theta})\phi)H\right) \exp\left(i(\zeta+\phi)X\right).
    \]
    Using the fact that $L^\dagger(0)XL(0) = L^\dagger(0)HXL(0) = 0$, this can be written as
    \[
        \frac{\mathrm{d}h'(\phi)}{\mathrm{d}\phi} = i\cos\left(2(\zeta+\phi)\right)\left(\frac{\theta-\tilde{\theta}}{2\pi}\right)L^\dagger(0)HL(0)h'(\phi),
    \]
    with a solution
    \[
        h'(\phi) = L^\dagger(0)\exp\left(-i\left(\frac{\theta-\tilde{\theta}}{4\pi}\right)\sin(2\zeta)H\right)L(0)h'(0).
    \]
    \begin{widetext}
    By assumption, we have $h'(0) = I_K$, and the final point on the curve (if moving to code space) is
    \[
    \begin{split}
        L'(7\pi/2-\zeta) &= W(7\pi/2-\zeta)L'(0)h'(7\pi/2-\zeta) \\
        &= -i\exp\left(i\left[\frac{7}{4}(\theta-\tilde{\theta}) +\frac{\tilde{\theta}\zeta}{2\pi} -\theta_\mathrm{err}(\zeta) - 
        \left(\frac{\theta-\tilde{\theta}}{4\pi}\right)\sin(2\zeta)\right]H\right)L(0) .
    \end{split}
    \]
    Comparing this with the target point on the curve $L'(7\pi/2-\zeta) = e^{i\theta H}L(0)$, we have
    \[
        \tilde{\theta} = \frac{3\pi\theta - 4\pi\theta_{\mathrm{err}}(\zeta) - \theta\sin(2\zeta)}{7\pi - 2\zeta - \sin(2\zeta)}.
    \]
    On the other hand, if the state is to be moved to the error space, the final point on the curve is
    \[
    \begin{split}
        L'(4\pi-\zeta) &= W(4\pi-\zeta)L'(0)h'(\phi) \\
        &= X \exp\left(i\left[\left(\frac{4\pi-\zeta}{2\pi}\right)\tilde{\theta}-2\theta-\theta_\mathrm{err}(\zeta) - \left(\frac{\theta-\tilde{\theta}}{4\pi}\right)\sin(2\zeta)\right]H\right)L(0) .
    \end{split}
    \]
    \end{widetext}
    Comparing this with the target point on the curve $L'(4\pi-\zeta) = Xe^{i\theta H}L(0)$, we have
    \[
       \tilde{\theta} = \frac{12\pi\theta + 4\pi\theta_{\mathrm{err}}(\zeta) + \theta\sin(2\zeta)}{8\pi - 2\zeta + \sin(2\zeta)}.
    \]
\end{proof}

When a jump happens at $\zeta$, we have $2$ degrees of freedom in path selection. Specifically, the holonomic path is modified to $(W, 7\pi/2-\zeta)$ (respectively, $(W, 4\pi-\zeta)$) when transitioning to the code (error) space. When returning to the error space, the unitary transformation---while not strictly holonomic---emulates the effective holonomy within the original error space $X\mathbb{P}_0 X$. Note that this technique to modify the path only accounts for a single jump. We can, in principle, repeat this to find the updated paths when there are multiple jumps, so that the state reaches the target without any faults.
\begin{example}
    Consider the $[\![3, 1, 3]\!]$ bit-flip code with $H=\otimes_{i=1}^3\sigma_i^x$, $\theta=\pi/6$ and $X = \sigma_1^x\sigma_3^z$. We compared the probability of no faults when the path is kept constant at $V(\phi)$ throughout the evolution, with the one where the path is changed once at the first jump as shown in \cref{fig:ft_sims}. The plots shown are averaged over an ensemble of $1000$ trajectories. The error bars represent a $95\%$ confidence interval.
\end{example}

In this paper, we do not consider the influence of external noise, which will also impact the path operator, but we show in another work that such errors can also be corrected by modifying the path \cite{lanka2026steeringpathsmidflightfaulttolerance}.

\begin{figure}
    \centering
    \includegraphics[width=\linewidth]{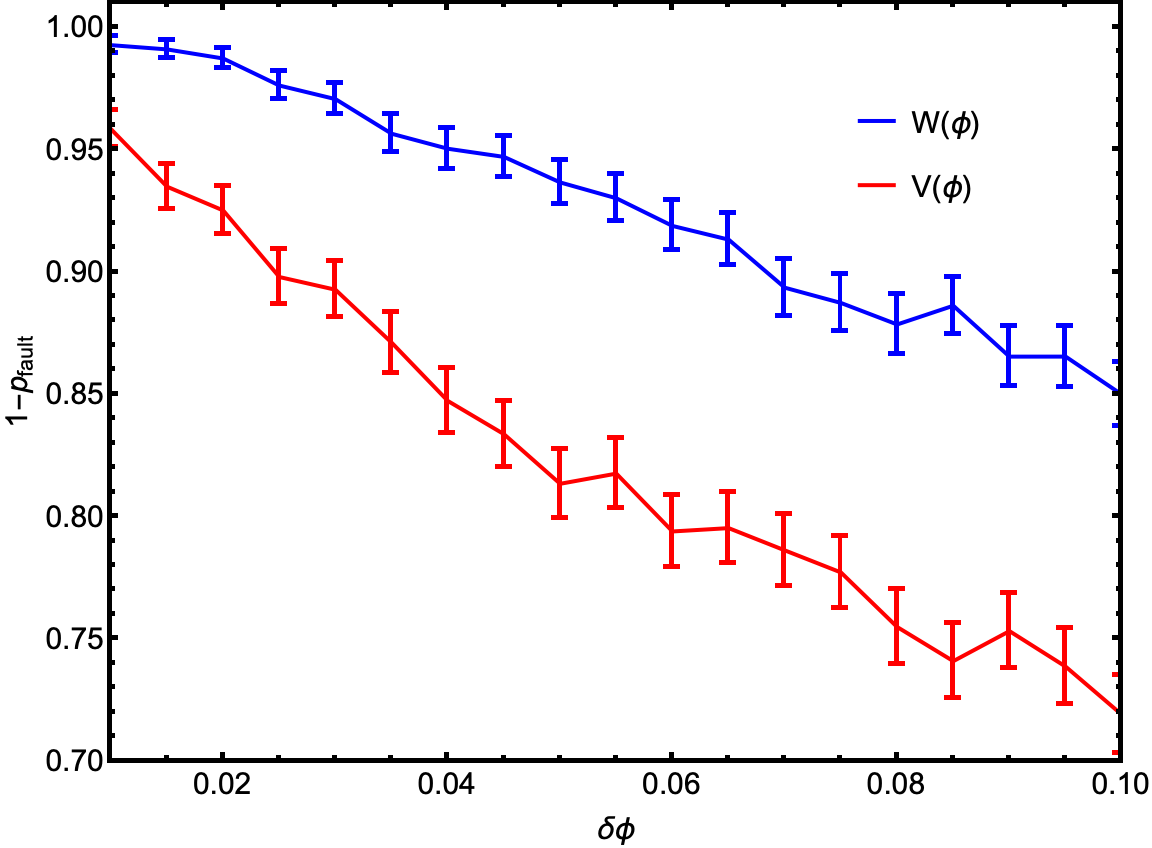}
    \caption{Probability of no fault at the end of the evolution as function of the rotation increment $\delta\phi$ for the $[\![3, 1, 3]\!]$ bit-flip code with $H=\otimes_{i=1}^3\sigma_i^x$, $\theta=\pi/6$ and $X = \sigma_1^x\sigma_3^z$. The plots are averaged over an ensemble of $2000$ trajectories. The error bars represent a $95\%$ confidence interval.}
    \label{fig:ft_sims}
\end{figure}

\subsection{Continuous-path holonomy}\label{sec:continuous_meas}

The holonomic procedure described in \cref{sec:proj_meas} involves making projective measurements after every small rotation by $\delta\phi$. Although that should work in principle, it is hard (and somewhat unnecessary) to perform a sequence of strong projective measurements in rapid succession. Since the rotation angle is small, we can perform continuous weak measurements of the stabilizers while also simultaneously rotating them. To do that, we define the ``angular velocity''
\[
    \omega = \frac{\mathrm{d}\phi}{\mathrm{d}t}
\]
as the \emph{constant} rate of code space rotation. Then, we can re-parameterize \cref{eq:rotation_unitary} as
\[
    V(t) = \mathrm{exp}\left(i\omega_H t H\right)\mathrm{exp}\left(i\omega_X t X\right),
\]
where $\omega_X \equiv \omega$, $\omega_H \equiv \frac{\theta}{2\pi} \omega$, and accordingly 
\[
    g(t)=V(t)g(0)V^\dagger(t).
\]
The dynamics of a state undergoing continuous weak measurements of the rotating stabilizers are described by a time-dependent stochastic Schr\"{o}dinger equation:
\[\label{eq:sse}
\begin{aligned}
    \ket{\mathrm{d}\psi} = -\frac{\kappa}{2}\sum_j\big(&g_j(t) - \langle g_j(t)\rangle \big)^2 \ket{\psi} \mathrm{d}t \\
                &+ \sqrt{\kappa} \big(g_j(t) - \langle g_j(t)\rangle \big)\ket{\psi}\mathrm{d}W_t^{(j)},
\end{aligned}
\]
where $\kappa$ is the measurement strength and $\mathrm{d}W_t^{(j)}$ is the Wiener increment for the $j^\text{th}$ measurement, defined by a zero-mean Gaussian random variable with variance $\mathrm{d}t$. Starting in a code state, we continuously measure the stabilizers, which are uniformly rotating from $\phi=0$ to $2\pi$ at a rate $\omega$. The procedure is run for a total time of $T=2\pi/\omega$, at which point the code space has rotated through a loop and back to the original code space and the code state should have acquired the desired holonomy. In principle, this rotation rate should depend on the measurement rate: the measurement has to be made faster than the code space is rotated, so that the state lies in the $+1$ eigenspace at all times with high probability. However, we have found that in practice the ratio $\omega/\kappa$ cannot be arbitrarily small since for a given measurement strength, very slow rotations correspond to very long gate times, which may diminish any potential quantum advantage. Therefore, the choice of $\omega$ is crucial: the rotations must be neither too fast, which could cause the state to jump into an error space, nor too slow, which would prolong the gate time. To find a suitable $\omega$, we determine the probability of avoiding a jump into an error space for a given $\omega/\kappa$. Observe that the channel corresponding to the rotated stabilizer measurements (and forgetting the outcomes) is unital---it is purity non-increasing. Hence, to find the average probability of measuring the state in the code space after a finite time $T$, it is helpful to represent the dynamics in terms of a density matrix, rather than a state vector, even though the system is governed purely by measurements. In particular, we have $\rho = \mathbb{E}[\ket{\psi}\bra{\psi}]$, yielding the Lindblad equation for the average dynamics:
\[
    \mathrm{d}\rho = \kappa \sum_j \big[g_j(t)\rho g_j(t) - \rho\big] \mathrm{d}t.
\]
The analysis is easier in the rotating frame, defined by $\tilde{\rho} = V^\dagger(t) \rho V(t)$. The dynamics for $\tilde{\rho}$ are given by
\[\label{eq:dynamics_rot_frame}
    \frac{\mathrm{d}\tilde{\rho}}{\mathrm{d}t} = -i[H_{\text{eff}}(t), \tilde{\rho}] + \kappa \sum_j(g_j\tilde{\rho}g_j - \tilde{\rho}),
\]
where 
\[
    H_{\text{eff}}(t) \equiv \omega_X X + \omega_H \tilde{H}(t)
\]
is the \emph{effective Hamiltonian}, and 
\[
    \tilde{H}(t) \equiv \cos(2\omega_X t)H - i\sin(2\omega_X t)XH.
\]

As with the projective measurements case, we would like to find the probability that the state makes no jump when the evolution is driven by performing weak measurements. Note that the state should have subject to at least one jump when the expectation of the original code space projector deviates from unity. The probability that the state makes no jump can defined as
\[
\begin{split}
    1 - p_\mathrm{jump} &\equiv \langle \mathbb{P}_0\rangle_{\rho(T)} \\
    &= \frac{1}{2^r} \sum_{N \subseteq \{1, 2, ... r\}} \prod_{j \in N} \langle g_j(T) \rangle_{\rho(T)},
\end{split}
\]
where the sum is over all subsets $N$ of indices, including the empty set. (The empty product (when $N = \emptyset$) is defined to be $1$.) Observe that the expectations of the rotated stabilizers with respect to the state in the lab frame are equal to those of the unrotated stabilizers with respect to the state in the rotated frame: 
\[
\langle g_j(t) \rangle_{\rho(t)} = \langle g_j \rangle_{\tilde{\rho}(t)}.
\]
Let us find the expectation of a stabilizer $g$ that anticommutes with $X$. (We omit the subscript $\tilde{\rho}$ for brevity, but all the expectations are taken with respect to $\tilde{\rho}$.) Consider the dynamics of $\langle g \rangle$:
\[
    \begin{aligned}
        \frac{\mathrm{d}\langle g \rangle}{\mathrm{d}t} &= \text{Tr}\left[\frac{\mathrm{d}\tilde{\rho}}{\mathrm{d}t}g\right] \\
        &= -2i\omega_X\langle gX \rangle - 2\omega_H\sin(2\omega_X t)\langle gXH \rangle.
    \end{aligned}
\]
To get a closed set of equations, we also find the dynamics of $\langle gX \rangle$ and $\langle gXH \rangle$:
\begin{subequations}
    \begin{align}
        &\begin{split}
            \frac{\mathrm{d}\langle gX \rangle}{\mathrm{d}t} = 
            &-2i\omega_X\langle g \rangle \\ 
            &- 2i\omega_H\cos(2\omega_X t)\langle gXH \rangle \\
            &- 2\kappa \langle gX \rangle,
        \end{split} \\
        &\begin{split}
            \frac{\mathrm{d}\langle gXH \rangle}{\mathrm{d}t} = & -2i\omega_H\cos(2\omega_H t) \langle gX \rangle \\
            &+ 2\omega_H\sin(2\omega_X t)\langle g \rangle \\
            &- 2\kappa \langle gXH \rangle.
        \end{split}
    \end{align}
\end{subequations}
To solve for $\langle g \rangle$, we solve a coupled differential equation for the vector $\mathbf{x}_g(t) = \left(\langle g\rangle,\langle gX\rangle,\langle gXH\rangle\right)^T$: 
\[\label{eq:coup_diff_eq}
    \frac{\mathrm{d}\mathbf{x}_g(t)}{\mathrm{d}t}=A(t)\mathbf{x}_g(t),
\]
where the time-dependent matrix $A(t)$ is given by
\[
\begin{split}
    &A(t) = \\
    &\begin{pmatrix}
    0 & -2i\omega_X & -2\omega_H \sin(2\omega_X t) \\
    -2i\omega_X & -2\kappa & -2i \omega_H \cos(2\omega_X t) \\
    2\omega_H \sin(2\omega_X t) & -2i \omega_H \cos(2\omega_X t) & -2\kappa
\end{pmatrix}.
\end{split}
\]
Consider the following lemma on the evolution of the expectation values of stabilizer generators based on their structure:
\begin{lemma}\label{lemma:prod_stabs_evol}
    The expectations of all stabilizer generators that anticommute with $X$ evolve symmetrically in the rotated frame: $\langle g \rangle_{\tilde{\rho}(t)} = \langle h \rangle_{\tilde{\rho}(t)} \; \forall g, h \in S_\mathrm{ac}(0)$. The expectations of all the stabilizer generators that commute with $X$ remain constant at 1: $\langle g \rangle_{\tilde{\rho}(t)} = 1 \; \forall g \in S_\mathrm{c}(0)$.
\end{lemma}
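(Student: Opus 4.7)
The plan is to exploit a symmetry of the rotated-frame dynamics in \cref{eq:dynamics_rot_frame}. My central tool will be an invariance principle: if a Hermitian operator $s$ commutes with $X$, $H$, and every measured stabilizer $g_j$, then $s$ automatically commutes with $\tilde H(t)=\cos(2\omega_X t)H - i\sin(2\omega_X t)XH$, hence with $H_{\text{eff}}(t)$, and with each jump operator in the dissipator. Consequently $M(t)\equiv s\tilde\rho(t)$ satisfies the same Lindblad equation as $\tilde\rho(t)$, so whenever $s\tilde\rho(0)=\tilde\rho(0)$ at the initial time, uniqueness of solutions to a linear ODE forces $s\tilde\rho(t)=\tilde\rho(t)$ for all $t$.

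Given this principle, the second claim is immediate. For $g\in S_c(0)$ we have $[g,X]=0$ by definition, $[g,H]=0$ because $H$ is a logical operator (and hence commutes with every stabilizer), and $[g,g_j]=0$ because stabilizers commute pairwise. The invariance principle applied with $s=g$ (and initial condition $g\tilde\rho(0)=\tilde\rho(0)$, since the initial code state is stabilized by $g$) yields $g\tilde\rho(t)=\tilde\rho(t)$, and tracing both sides gives $\langle g\rangle_{\tilde\rho(t)}=1$. For the first claim I would take any $g,h\in S_{ac}(0)$ and form the product $s=gh$, which is itself a stabilizer of the initial code state. Being a product of stabilizer generators, $s$ commutes with every $g_j$ and with $H$; moreover $Xs=Xgh=-gXh=ghX=sX$, using that both $g$ and $h$ anticommute with $X$, so $[s,X]=0$ as well. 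The invariance principle then gives $s\tilde\rho(t)=\tilde\rho(t)$; multiplying on the left by $g$ and using $g^2=I$ yields $g\tilde\rho(t)=h\tilde\rho(t)$, and tracing gives $\langle g\rangle_{\tilde\rho(t)}=\langle h\rangle_{\tilde\rho(t)}$.

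The only nontrivial step is the invariance principle itself, and this is routine: commutation of $s$ with $\tilde H(t)$ follows from $[s,H]=[s,X]=0$ (which together imply $[s,XH]=0$), and preservation of the relation $s\tilde\rho=\tilde\rho$ under the dissipator $\kappa\sum_j(g_j\tilde\rho g_j-\tilde\rho)$ follows because each $g_j$ commutes with $s$, so one can pull $s$ freely past $g_j$. I do not expect any real obstacle beyond the careful bookkeeping of these algebraic commutations; no analytic estimates or ODE bounds are needed, and the result holds without approximation for the full (not only asymptotic) rotated-frame evolution.
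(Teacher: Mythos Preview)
Your argument is correct. The invariance principle is sound: when $s$ commutes with $X$, $H$, and every $g_j$, one has $[s,H_{\text{eff}}(t)]=0$ and $g_j s=s g_j$, so left-multiplication by $s$ commutes with the rotated-frame generator $\mathcal L_t$; hence $s\tilde\rho(t)$ and $\tilde\rho(t)$ solve the same linear ODE, and the initial condition $s\tilde\rho(0)=\tilde\rho(0)$ (which holds since the code state is a $+1$ eigenstate of every stabilizer) forces equality for all $t$. Applying this with $s=g\in S_c(0)$ and with $s=gh$ for $g,h\in S_{ac}(0)$ yields both claims exactly as you describe.

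This is a genuinely different route from the paper. The paper instead works at the level of scalar expectation values: it shows that for each $g\in S_{ac}(0)$ the triple $\mathbf x_g=(\langle g\rangle,\langle gX\rangle,\langle gXH\rangle)^T$ obeys a closed linear system $\dot{\mathbf x}_g=A(t)\mathbf x_g$ with the \emph{same} coefficient matrix $A(t)$ for every such $g$, and then observes that the initial data $\mathbf x_g(0)=(1,0,0)^T$ are also independent of $g$, so the time-ordered exponential gives $\mathbf x_g(t)=\mathbf x_h(t)$. For $g\in S_c(0)$ the paper simply notes $\mathrm d\langle g\rangle=0$ directly from the commutation relations. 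Your symmetry argument is cleaner and yields the stronger operator identity $s\tilde\rho(t)=\tilde\rho(t)$ without ever writing down the three-dimensional ODE or verifying that it closes. The paper's approach, on the other hand, has the pragmatic advantage that the explicit system $\dot{\mathbf x}=A(t)\mathbf x$ is exactly what is needed for the perturbative computation of $\langle g\rangle_{\tilde\rho(T)}$ in the theorem that follows, so the bookkeeping is not wasted.
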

\begin{proof}
The expectation of $g$ that anticommutes with $X$ can be found by integrating \cref{eq:coup_diff_eq}:
\[
    \textbf{x}_g(t) = \mathcal{T}\exp\left[\int_0^tA(t)\mathrm{d}t\right]\textbf{x}_g(0),
\]
where $\mathcal{T}$ is the time-ordering operator. Similarly, for different stabilizer $h$, we have 
\[
    \textbf{x}_h(t) = \mathcal{T}\exp\left[\int_0^tA(t)\mathrm{d}t\right]\textbf{x}_h(0),
\]
Since the system is initialized in the code space, the expectation values of all the stabilizers are equal to $1$, and the expectation of $gX$, and $gXH$ for any stabilizer $g$ that anticommutes with $X$ are 0; in other words, $\textbf{x}_g(0) = \textbf{x}_h(0)$, giving $\langle g \rangle = \langle h \rangle$.

On the other hand, if the stabilizer $g$ commutes with $X$, it is easy to see that $\mathrm{d}\langle g \rangle = 0$, i.e., its expectation value remains constant at the initial value of $+1$.
\end{proof}

Observe that an even product of the stabilizer generators that anticommute with $X$ (say, $g_\text{even}$) commutes with $X$. By \cref{lemma:prod_stabs_evol}, $\langle g_\mathrm{even} \rangle (t) = 1\, \forall t$. However, the expectation of an odd product of the stabilizers $g_\mathrm{odd}$ evolves nontrivially as $\langle g_\mathrm{odd} \rangle_{\tilde{\rho}(t)}$. Again, by \cref{lemma:prod_stabs_evol}, all the nontrivial expectations can be combined into $\langle g \rangle_{\tilde{\rho}(t)}$, yielding
\[
    1 - p_\mathrm{jump} = \frac{1+\langle g \rangle_{\tilde{\rho}(t)}}{2}.
\]
Since we generally operate in a regime where the rotations are slower than measurements, we can use perturbation theory to solve \cref{eq:coup_diff_eq} up to second order in $\omega/\kappa$.
\begin{theorem}
Suppose a stabilizer code $\mathcal{C}$ is continuously rotated at a rate $\omega$ by $V(t)$. Starting at $\ket{\bar{\psi}(0)} \in \mathcal{C}$, by continuously measuring the rotating stabilizer generators at a rate $\kappa$, the probability of no jumps between the syndrome spaces is
\[\label{eq:confinement_prob}
    1 - p_{\mathrm{jump}} = \frac{1}{2}\bigg[ 1 + \bigg(1 - \frac{\omega \theta^2}{2\pi\kappa}\bigg)\exp\bigg(-\frac{4\pi\omega}{\kappa}\bigg) \bigg].
\]
\end{theorem}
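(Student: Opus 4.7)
The plan is to solve the coupled linear system $\mathrm{d}\mathbf{x}_g/\mathrm{d}t = A(t)\mathbf{x}_g(t)$ with initial condition $\mathbf{x}_g(0) = (1,0,0)^\top$ perturbatively in the small ratio $\omega/\kappa$, then invoke the identity $1 - p_{\mathrm{jump}} = (1 + \langle g\rangle_{\tilde\rho(T)})/2$ at $T = 2\pi/\omega$, which follows from \cref{lemma:prod_stabs_evol}. The exponent $e^{-4\pi\omega/\kappa}$ telegraphs the mechanism: the effective damping rate of $\langle g\rangle$ is $2\omega^2/\kappa$, which is exactly what adiabatic elimination of the strongly damped auxiliary variables $y_2 = \langle gX\rangle$ and $y_3 = \langle gXH\rangle$ produces.

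Concretely, I would decompose $A(t) = A_0 + A_1(t)$ with $A_0 = \mathrm{diag}(0,-2\kappa,-2\kappa)$ collecting the measurement-induced damping and $A_1(t)$ gathering all the coherent couplings (which are of order $\omega$). Since $y_2$ and $y_3$ relax on the fast time scale $1/\kappa$ while $y_1 = \langle g\rangle$ evolves on the slow scale $\kappa/\omega^2$, I would slave $y_2,y_3$ to $y_1$ by setting $\dot y_2 \approx \dot y_3 \approx 0$ in the last two rows of \cref{eq:coup_diff_eq}. To leading order in $\omega/\kappa$,
\begin{equation}
y_2 \approx -\frac{i\omega_X}{\kappa}\,y_1, \qquad y_3 \approx \frac{\omega_H \sin(2\omega_X t)}{\kappa}\,y_1,
\end{equation}
up to exponentially small transients on the scale $1/\kappa \ll T$ and subleading cross-terms of order $(\omega/\kappa)^2$ that do not enter the target expansion.

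Substituting these into $\dot y_1 = -2i\omega_X y_2 - 2\omega_H\sin(2\omega_X t)\,y_3$ yields the effective scalar equation
\begin{equation}
\dot y_1 = -\frac{2\omega_X^2}{\kappa}\,y_1 - \frac{2\omega_H^2 \sin^2(2\omega_X t)}{\kappa}\,y_1 + \mathcal{O}(\omega^3/\kappa^2).
\end{equation}
Integrating on $[0,T]$ with $T=2\pi/\omega$ and using $\omega_X = \omega$, $\omega_H = \theta\omega/(2\pi)$, the first term contributes $-4\pi\omega/\kappa$ to $\log y_1(T)$, while the second, after averaging $\sin^2$ over the two full periods contained in $[0,T]$, contributes $-\theta^2\omega/(2\pi\kappa)$. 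Exponentiating and expanding the small correction to linear order (it is already $\mathcal{O}(\omega/\kappa)$) gives $\langle g\rangle(T) = (1 - \theta^2\omega/(2\pi\kappa))\,e^{-4\pi\omega/\kappa} + \mathcal{O}((\omega/\kappa)^2)$, which inserted into $(1+\langle g\rangle(T))/2$ reproduces \cref{eq:confinement_prob}.

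The main obstacle will be justifying that the adiabatic elimination is valid uniformly on $[0,T]$ and that the oscillatory cross-coupling between $y_2$ and $y_3$ (the $-2i\omega_H\cos(2\omega_X t)$ entries of $A(t)$) does not produce a resonant correction shifting either the exponent or the prefactor at the order claimed. A clean way to address this is to perform a Magnus or averaging analysis on the restricted $(y_2,y_3)$-block after removing the fast rotation at frequency $2\omega_X$, verifying that the oscillating remainder averages to zero and that non-vanishing resonances appear only at $\mathcal{O}((\omega/\kappa)^2)$, which is below the stated precision.
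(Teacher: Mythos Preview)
Your argument is correct and reaches the same result, but the route differs from the paper's. The paper splits $A(t)=A_0+\omega_H A_1(t)$ with a \emph{non-diagonal} $A_0$ that retains the $-2i\omega_X$ coupling between $\langle g\rangle$ and $\langle gX\rangle$, and then runs a Dyson-type perturbation series in $\omega_H$ alone: the factor $e^{-4\pi\omega/\kappa}$ comes from the exact zeroth-order propagator $e^{A_0 t}$, the first-order term in $\omega_H$ vanishes by the block structure of $A_1$, and the $-\theta^2\omega/(2\pi\kappa)$ prefactor appears at second order via the iterated integral \cref{eq:perturb_sol}. Your decomposition instead puts \emph{all} coherent couplings into the perturbation and keeps only $A_0=\mathrm{diag}(0,-2\kappa,-2\kappa)$, then slaves the two fast variables to $\langle g\rangle$ by adiabatic elimination. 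This is more physical and computationally lighter---no matrix exponential of a non-diagonal block, no iterated Dyson integrals---and it packages the two contributions to the decay of $\langle g\rangle$ (the $\omega_X$-driven and the $\omega_H$-driven pieces) into a single effective scalar ODE. The paper's approach, by contrast, separates the two mechanisms cleanly by perturbative order in $\omega_H$, which makes it immediate that the leading $\theta$-dependence is quadratic. Your concern about the $-2i\omega_H\cos(2\omega_X t)$ cross-coupling and the initial-layer transient is well placed but, as you note, both effects enter $\log y_1(T)$ only at $\mathcal{O}((\omega/\kappa)^2)$ and do not disturb the stated expression.
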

\begin{proof}
Suppose $A(t)$ is decomposed into a time-independent matrix $A_0$ and a time-dependent matrix $A_1(t)$ with $A(t) = A_0 + \omega_H A_1(t)$:
\begin{subequations}\label{eq:A_t}
\begin{align}
    A_0 & = \begin{pmatrix}
        0 & -2i\omega_X & 0 \\
        -2i\omega_X & -2\kappa & 0 \\
        0 & 0 & -2\kappa
    \end{pmatrix} \\
    A_1(t) &= \begin{pmatrix}
        0 & 0 & -2\sin(2\omega_X t) \\
        0 & 0 & -2i \cos(2\omega_X t) \\
        2\sin(2\omega_X t) & -2i \cos(2\omega_X t) & 0
    \end{pmatrix},
\end{align}
\end{subequations}
and consider the following ansatz for the solution $\mathbf{x}(t)$:
\[\label{eq:ansatz}
    \mathbf{x}(t) = \sum_{j=0}^\infty \tilde{\mathbf{x}}^{(j)}(t),
\]
where $\tilde{\mathbf{x}}^{(j)}(t) = \omega_H^j \mathbf{x}^{(j)}(t)$. Substituting \cref{eq:ansatz} in \cref{eq:coup_diff_eq} and collecting the terms by the degree of their coefficients, we obtain
\begin{subequations}
\begin{align}
    &\frac{\mathrm{d}}{\mathrm{d}t} \tilde{\mathbf{x}}^{(0)}(t) = \mathbf{A}_{0}\mathbf{x}^{(0)}(t), \\
    &\frac{\mathrm{d}}{\mathrm{d}t} \tilde{\mathbf{x}}^{(n)}(t) = A_0 \tilde{\mathbf{x}}^{(n)}(t) + \omega_H A_1(t) \tilde{\mathbf{x}}^{(n-1)}(t) \;\;\; n \geq 1, 
\end{align}
\end{subequations}
with solutions
\begin{subequations}\label{eq:perturb_sol}
\begin{align}
    \tilde{\mathbf{x}}^{(0)}(t) &= e^{A_0 t}\tilde{\mathbf{x}}(0), \\
    \tilde{\mathbf{x}}^{(n)}(t) &= \omega_H \int_{0}^{t} e^{A_0(t-t')} A_1(t')\tilde{\mathbf{x}}^{(n-1)}(t')\mathrm{d}t', \;\;\; n \geq 1.
\end{align}
\end{subequations}
However, we are interested only in the first term of $\mathbf{x}(t)$. Since the system is initialized in the code space, $\langle g \rangle_{\tilde{\rho}(0)} = 1$. Then the unperturbed solution is
\[
    \tilde{\mathbf{x}}^{(0)}(t)_1 = \exp\left(-\frac{4\pi \omega}{\kappa}\right) + \mathcal{O}\left(\left(\omega/\kappa\right)^2\right).
\]
Due to the structure of $A(t)$, the first-order correction $\tilde{\mathbf{x}}^{(1)}(t)_1 = 0$. But the second-order correction is
\[
    \tilde{\mathbf{x}}^{(2)}(1)_1 = -\frac{\omega \theta^2}{2\pi\kappa}\exp\left(-\frac{4\pi\omega}{\kappa}\right) + \mathcal{O}\left(\left(\omega/\kappa\right)^2\right),
\]
giving
\[
    \langle g \rangle_{\tilde{\rho}(T)} \approx \bigg(1 - \frac{\omega \theta^2}{2\pi\kappa}\bigg)\exp\bigg(-\frac{4\pi\omega}{\kappa}\bigg).
\]
\end{proof}
\begin{example}
    As a simple example, we again choose the $[\![3, 1, 3]\!]$ bit-flip code, spanned by $\{\ket{000}, \ket{111}\}$ with the stabilizer group $S(0) = \{\sigma_1^z\sigma_2^z, \sigma_1^z\sigma_3^z, \sigma_2^z\sigma_3^z\}$. Suppose the initial state is a pure code state: $\ket{\psi(0)} = \ket{\bar{0}} = \ket{000}$, and we would like to apply the logical $\bar{\sigma^x}$ gate. Choose $H = \otimes_{i=1}^3\sigma_i^x$, and $X = \sigma_1^x\sigma_3^z$, so that it anticommutes with the stabilizers $\sigma_1^z\sigma_2^z$ and $\sigma_1^z\sigma_3^z$, and $H$. By continuously measuring the observables $\{V(t)gV^\dagger(t) \; \forall g \in S(0)\}$, the state should, in principle, rotate into $\ket{111}$ at time $t=2\pi/\omega$. Setting $\kappa=1$, we plot the average fidelity between $\ket{\psi(T)}$ and $\ket{111}$ as a function of the rotation rate $\omega$ in \cref{fig:p_jump}.
\end{example}
\begin{figure}
    \centering
    \includegraphics[width=\linewidth]{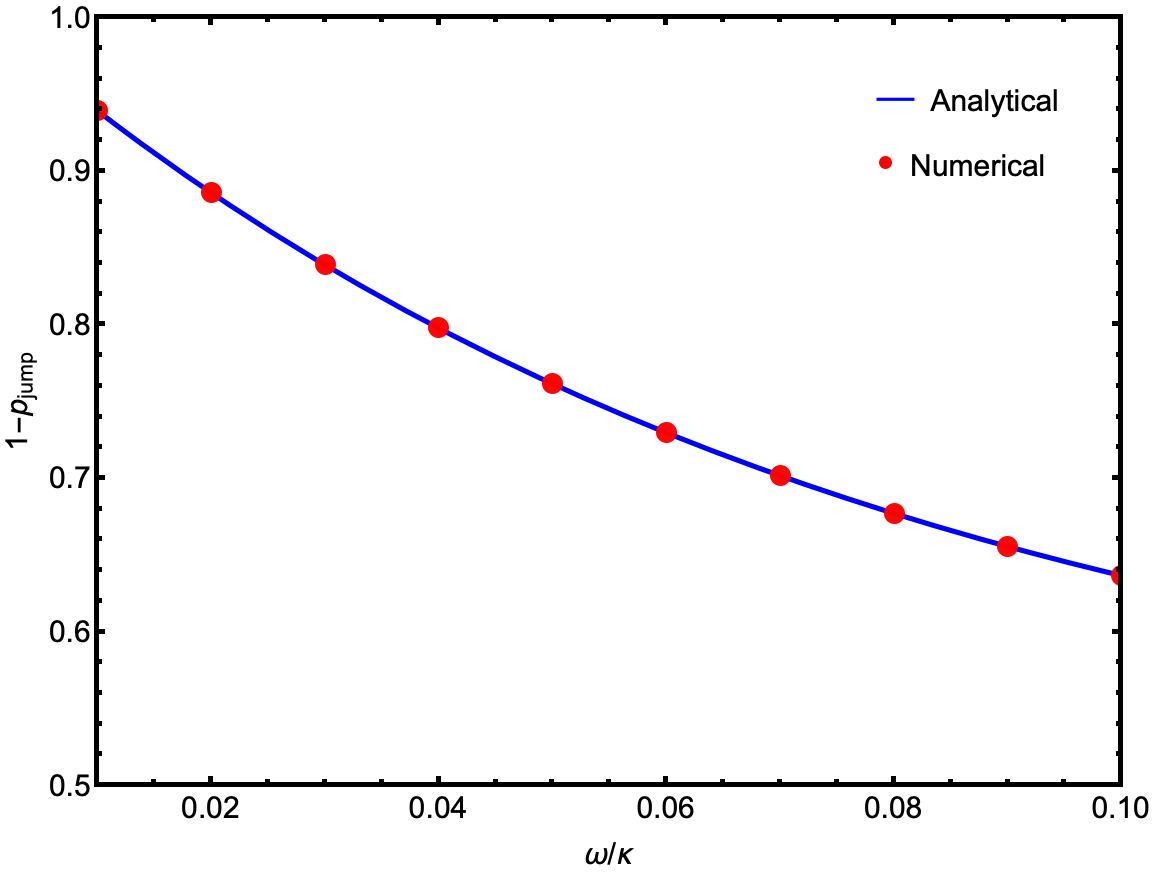}
    \caption{Probability of a jump as a function of $\omega/\kappa$. The analytical equation is \cref{eq:confinement_prob}.}
    \label{fig:p_jump}
\end{figure}

As in the case of projective measurements, we would like this approach to be implicitly fault-tolerant. Whenever the expectation of \emph{any} of the rotated generators is flipped to $-1$ (corresponding to a jump to the error space), the path must be changed on-the-fly from $V(t)$ to $W(t)$. In a real system, however, we do not have access to the expectations themselves: they are filtered from the measurement current \cite{bouten2006introductionquantumfiltering, gough2018introductionquantumfiltering, Lin2020optimalpolynomial, BELAVKIN1992171, chen_brun}. Here, we show one technique to detect such a jump from the measurement current \cite{raghunathan2010continuousmonitoringimprovesinglephoton}.

\subsubsection{Jump detection}
Suppose a set of stabilizer generators $\{g(t)\}$ are continuously measured. The measurement current for one channel (corresponding to a single stabilizer) reads
\[
    i(t) = \mu + \frac{1}{2\sqrt{\kappa}}\frac{\mathrm{d}W_t}{\mathrm{d}t},
\]
where $\mu \equiv \langle g(t) \rangle_{\rho(t)}$. Since $g(t)$ are Pauli operators rotated by a unitary, their eigenvalues are preserved; the two extreme values of $\langle g(t) \rangle$ are $\pm 1$. A jump corresponds to a change of $\mu$ from $+1$ to $-1$. This can be detected using a simple hypothesis test:
\[
\left.\begin{array}{l}
H_+: \mu=+1 \\
H_-: \mu=-1
\end{array}\right\}.
\]
Suppose the output current is averaged over a succession of time-intervals of size $\Delta t$:
\[
    \bar{i}(t) = \frac{1}{\Delta t}\int_t^{t+\Delta t}i(t)\mathrm{d}t.
\]
The averaged current follows a Gaussian distribution with a variable mean $\mu$ and constant variance $1/(4\kappa\Delta t)$. Define $y_k \equiv \bar{i}(t_k)$ to be the discretized output sample. Taking the log-likelihood ratio of all the output samples given the means $\pm 1$, we have
\[
    S_n = \ln\left(\frac{p\big(\{y_1, \cdots, y_n\}|\mu=-1\big)}{p\big(\{y_1, \cdots, y_n\}|\mu=+1\big)}\right).
\]
Since $y_k$ are samples from independent and identical Gaussian distributions,
\[
    S_n = -8\kappa\Delta t\sum_{k=1}^n y_k.
\]
It is easy to see that $S_n$ will have a $V$-shaped profile; whenever the state is in the code space, the true expectations are all $+1$, and $S_n$ will have a negative drift. When the state jumps into the error space, the expectation flips to $-1$, and $S_n$ will then have a positive drift. The instant when there is a change in the direction corresponds to $t_\mathrm{err}$. To detect this change, we define
\[
    m_k \equiv \min_{1\leq j\leq k}S_j.
\]
Then the jump time is
\[
    t_\mathrm{err} = \min_{k}(S_k-m_k)\geq h,
\]
for some small threshold $h$ used to mitigate false positives.

The exact error operator $E(t_\mathrm{err})$, and the updated rotation path $W(t)$ can be found by reparameterizing $\phi$ as $\omega t$ in \cref{eq:error_op} and \cref{eq:new_path_op}, respectively.

\section{Error correcting conditions}\label{sec:qec_conds}

A quantum error-correcting code is designed to correct any error from a \emph{correctable error set} $\mathcal{E}$. During the runtime of this procedure, by construction, the code is changed at every instant. However, we would like for all the instantaneous codes still to be able to correct $\mathcal{E}$. In this section, we derive sufficient conditions on the codes to preserve the correctability of $\mathcal{E}$. Formally, we define a correctable set as follows:

\begin{definition}[Correctable set]
    Given an $[\![n, k, d]\!]$ quantum error-correcting code with stabilizer group $S$, a subset $\mathcal{E}$ of the $n$-qubit Pauli group is defined as a correctable error set if for $E_a, E_b \in \mathcal{E}$, $\mathbb{P}_0 E_b^\dagger E_a \mathbb{P}_0 = \gamma_{ab}\mathbb{P}_0$ $\forall a, b$, where $\gamma_{ab}$ are elements of a Hermitian matrix. (This is the Knill-Laflamme condition \cite{KL_cond}.)
\end{definition}

To preserve correctability, we require that the rotated codes also satisfy the Knill-Laflamme condition with the original correctable error set $\mathcal{E}$:
\[\label{eq:qec_cond}
    \mathbb{P}(t)E_b^\dagger E_a \mathbb{P}(t) = \gamma_{ab}(t)\mathbb{P}(t).
\]
Using the definition for $\mathbb{P}(t)$ from \cref{eq:proj_rotated_code}, this is equivalent to
\[
    \mathbb{P}_0 \big[V^\dagger(t)E_b^\dagger V(t)\big] \big[V^\dagger(t)E_a V(t)\big] \mathbb{P}_0 = \gamma_{ab}(t)\mathbb{P}_0,
\]
i.e., all the rotated errors $\{V^\dagger(t)E_aV(t)\}$ are correctable by the original code. Denote $\mathcal{E}^{(2)}$ as the set of products of all pairs of correctable errors:
\[
    \mathcal{E}^{(2)} = \{E_aE_b| E_a, E_b \in \mathcal{E}\}.
\]
Since we are working with stabilizer codes, all the generators and the error operators are Pauli matrices. Depending on whether an element $D \in \mathcal{E}^{(2)}$ commutes/anticommutes with $H$ and $X$, $V^\dagger(t)D V(t)$ is a linear combination of elements from
\[
    \mathcal{A} =  \{D, HD, XD, HXD\}.
\]
\begin{table}[ht]
    \centering
    \begin{tabular}{c@{}c@{\hskip 1.5em}c@{\hskip 1.5em}c@{\hskip 1.5em}c@{}}
    \toprule
         \textbf{Case} & \textbf{Condition} &  \textbf{Linear combination of} \\
    \midrule
         1 & $[H, D] = [X, D] = 0$ & $D$ \\
         \addlinespace
         2 & $[H, D] = \{X, D\} = 0$ & $D, XD$ \\
         \addlinespace
         3 & $\{H, D\} = [X, D] = 0$ & $D, HD, HXD$ \\
         \addlinespace
         4 & $\{H, D\} = \{X, D\} = 0$ & $D, HD, XD$ \\
         \bottomrule
    \end{tabular}
    \caption{Depending on whether $H$ and $X$ commute or anticommute with the Pauli operator $D$, $V^\dagger(t)D V(t)$ is a linear combination of elements from $\mathcal{A}$ listed in this table.}
    \label{tab:lin_comb_ops}
\end{table}

For correctable errors that commute with both $H$ and $X$, i.e., in case 1 of \cref{tab:lin_comb_ops}, $\mathbb{P}(t)E_b^\dagger E_a \mathbb{P}(t)=\gamma_{ab}(0)\mathbb{P}(t)$ for all $t$ and the error-correcting condition is trivially satisfied through out the loop. To check if the other error operators are correctable, let us first define the syndrome of a Pauli operator $A$, $\vec{s}_A$ as an $(n-k)$-bit binary vector with the $i^\mathrm{th}$ bit being 0 (1) if $A$ commutes (anticommutes) with $g_i$. Then consider the following lemma:
\begin{lemma}\label{lemma:A_sandwiched_in_P0}
    For a Pauli operator $A$, if $\vec{s}_A \neq 0$, then $\mathbb{P}_0 A \mathbb{P}_0 = 0$. If $\vec{s}_A = 0$, then $A$ is either a stabilizer (with $\mathbb{P}_0 A \mathbb{P}_0 = \mathbb{P}_0$), or a logical operator (with $\mathbb{P}_0 A \mathbb{P}_0 \neq \alpha \mathbb{P}_0$).
\end{lemma}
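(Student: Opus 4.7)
The plan is to split into the two cases $\vec{s}_A \neq 0$ and $\vec{s}_A = 0$, exploiting the fact that $\mathbb{P}_0$ factors as the product $\prod_j (I+g_j)/2$ over the stabilizer generators and that $A$ is Pauli (so it either commutes or anticommutes with each $g_j$).

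For the first case, I would pick any generator $g_i$ for which the $i$-th bit of $\vec{s}_A$ is $1$, so $\{A, g_i\} = 0$. Write $\mathbb{P}_0 = \frac{I + g_i}{2}\,\mathbb{P}_0' = \mathbb{P}_0'\,\frac{I + g_i}{2}$, where $\mathbb{P}_0'$ is the product of the remaining generator projectors (which commutes with $g_i$). Using $A\,(I+g_i) = (I-g_i)\,A$, I would push $A$ through one copy of $\mathbb{P}_0$, leaving a factor of $\frac{I-g_i}{2}$ adjacent to a factor of $\frac{I+g_i}{2}$ from the other copy of $\mathbb{P}_0$. These are orthogonal projectors onto the $\pm 1$-eigenspaces of $g_i$, so their product vanishes and $\mathbb{P}_0 A \mathbb{P}_0 = 0$.

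For the second case, $A$ commutes with every $g_j$, so $A$ lies in the normalizer $N(S)$ of the stabilizer group. Standard stabilizer theory gives the decomposition $N(S) = S \cdot \mathcal{L}$, where $\mathcal{L}$ is the logical Pauli group. If $A \in S$, then $A\mathbb{P}_0 = \mathbb{P}_0$ by definition of the code projector, giving $\mathbb{P}_0 A \mathbb{P}_0 = \mathbb{P}_0$ (proportional to $\mathbb{P}_0$). If instead $A$ is (up to stabilizer multiplication) a nontrivial logical Pauli $\bar{A}$, then on the $2^k$-dimensional code space it acts as $\bar{A}$, which is traceless and not proportional to the identity $I_{2^k}$. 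Since the restriction of $\mathbb{P}_0$ to the code space is $I_{2^k}$, the operator $\mathbb{P}_0 A \mathbb{P}_0$ restricts to $\bar{A}$, which cannot equal $\alpha I_{2^k}$ for any scalar $\alpha$; lifting back, $\mathbb{P}_0 A \mathbb{P}_0 \neq \alpha \mathbb{P}_0$.

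This is mostly a straightforward application of stabilizer formalism; there is no serious obstacle. The only mild subtlety is making the two subcases of the trivial-syndrome case precise, namely invoking the fact that $N(S)/S$ is isomorphic to the logical Pauli group and that a nontrivial logical Pauli is never a scalar multiple of the identity on the code space. The anticommuting case reduces to the single identity $\bigl(\tfrac{I-g_i}{2}\bigr)\bigl(\tfrac{I+g_i}{2}\bigr) = 0$, which is the entire content of that half of the proof.
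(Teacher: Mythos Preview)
Your proof is correct and follows essentially the same route as the paper's: for $\vec{s}_A \neq 0$ you push $A$ through the factored projector and use $(I-g_i)(I+g_i)=0$ (the paper does this for all anticommuting generators at once, you more economically use a single one), and for $\vec{s}_A = 0$ you invoke the normalizer decomposition $N(S)/S \cong \mathcal{L}$ just as the paper does. If anything, your second case is more complete than the paper's, which merely asserts $A \in N(S)$ without verifying the parenthetical claims; note also that you correctly obtain $\mathbb{P}_0 A \mathbb{P}_0 = \mathbb{P}_0$ for $A \in S$, whereas the lemma as stated has an apparent typo (``$=0$'').
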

\begin{proof}
See \cref{sec:A_sandwiched_in_P0_proof}.
\end{proof}

The rotated error operators are correctable, in principle, as long as they satisfy the KL-condition for the rotated codes. For simplicity, we choose the matrix $\gamma$ to be diagonal. This provides \emph{sufficient} conditions on the choice of $X$. We find that the operator $HD$ always satisfies the error-correcting condition:
\begin{proposition}\label{prop:HEbEa_correctable}
    For a logical operator $H$ and $D \in \mathcal{E}^{(2)}$, $\mathbb{P}_0 HD \mathbb{P}_0 \propto \mathbb{P}_0$.
\end{proposition}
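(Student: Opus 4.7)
The plan relies on two structural facts that combine almost immediately: $H$ is a logical Pauli and therefore commutes with every stabilizer generator, and $D\in\mathcal{E}^{(2)}$ inherits a Knill--Laflamme identity from the correctability of $\mathcal{E}$. First I would record that since $H$ commutes with each $g_j$, it also commutes with the code projector $\mathbb{P}_0 = \prod_j (I+g_j)/2$. This lets me slide $H$ through the leftmost factor and rewrite $\mathbb{P}_0 HD\mathbb{P}_0 = H\,\mathbb{P}_0 D \mathbb{P}_0$, which is the only move in the proof that uses the logical nature of $H$ in an essential way.

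Next I would invoke the Knill--Laflamme condition on $D$ itself. Writing $D = E_b^\dagger E_a$ for some $E_a,E_b \in \mathcal{E}$ (up to the Pauli phase from $E_b^\dagger = \pm E_b$), the correctability assumption gives $\mathbb{P}_0 D \mathbb{P}_0 = \gamma_{ab}\mathbb{P}_0$, and the sandwich collapses to $\gamma_{ab} H \mathbb{P}_0 = \gamma_{ab}\mathbb{P}_0 H$. To confirm that no uncorrectable component has sneaked in via the $H$ prefactor, I would close with \cref{lemma:A_sandwiched_in_P0} applied to $D$: it forces $D$ either to be a stabilizer (whence $\gamma_{ab} = \pm 1$ and the output is $\pm H\mathbb{P}_0$, the known logical dressing of the projector) or to carry a non-trivial syndrome (whence $\gamma_{ab} = 0$ and the output vanishes). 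Either way $HD$ never behaves as an unintended logical distinct from the prescribed $H$, which is the content of the claim for the purposes of \cref{eq:qec_cond}.

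I expect the only real subtlety to be interpretive rather than computational: ``$\propto\mathbb{P}_0$'' here should be read as ``a scalar multiple of $\mathbb{P}_0$, possibly dressed by the known logical $H$'', a convention justified by the fact that $[H,\mathbb{P}_0]=0$ makes $H$ a benign, fixed bookkeeping factor rather than a new error component. With that convention settled the argument is essentially a one-line commutation followed by a direct use of Knill--Laflamme, and \cref{lemma:A_sandwiched_in_P0} serves only as a sanity check that left-multiplication by $H$ does not introduce a new syndrome.
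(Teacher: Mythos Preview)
Your computation $\mathbb{P}_0 HD \mathbb{P}_0 = H\,\mathbb{P}_0 D \mathbb{P}_0 = \gamma_{ab} H\mathbb{P}_0$ is correct, but the interpretive escape hatch you propose---reading ``$\propto\mathbb{P}_0$'' as ``scalar times $\mathbb{P}_0$, possibly dressed by $H$''---is precisely the wrong move and hides the actual content of the result. The Knill--Laflamme condition in \cref{eq:qec_cond} requires the sandwich to be a \emph{scalar} multiple of $\mathbb{P}_0$; a nonzero $H\mathbb{P}_0$ contribution means that some pair of errors in $\mathcal{E}$ acts on the (rotated) code as a nontrivial logical, which is uncorrectable by definition. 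That the logical in question happens to be the one you are trying to implement offers no reprieve: you still cannot distinguish ``error occurred'' from ``no error occurred'', so $\mathcal{E}$ ceases to be correctable. Your argument therefore leaves exactly the case $\gamma_{ab}\neq 0$ (equivalently, $D$ a stabilizer) unresolved.

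The paper closes this gap not by reinterpreting proportionality but by observing that the dangerous case never feeds back into \cref{eq:qec_cond}. When $D$ is a stabilizer---the paper reaches this via a weight bound $\mathrm{wt}(D)\le d-1$ ruling out logicals, though your KL-based route also works---one has $[H,D]=0$ since $H$ is a logical operator. Referring to \cref{tab:lin_comb_ops}, this places $D$ in case~1 or case~2, and in neither of those does the term $HD$ appear in the expansion of $V^\dagger(t)DV(t)$. So whenever $\mathbb{P}_0 HD \mathbb{P}_0$ fails to be a literal scalar multiple of $\mathbb{P}_0$, the operator $HD$ is simply absent from the rotated error, and \cref{eq:qec_cond} is unaffected. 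This commutation-plus-table step is the missing idea in your proposal; without it the claim that $HD$ ``has a trivial influence on correctability'' does not follow.
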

\begin{proof}
    Since $H$ is a logical operator, $\vec{s}_H=0$. Moreover, $\mathbb{P}_0 HD \mathbb{P}_0 =0$ unless $\vec{s}_{D} = 0$. Observe that the product of any pair of correctable errors has weight at most $d-1$:
    \[
    \mathrm{wt}(D) \leq d-1 \quad \forall D \in \mathcal{E}^{(2)}.
    \]
    However, the minimum weight of any logical operator is $d$. Therefore, $D$ cannot be a logical operator, making its syndrome 0 only if $D$ is a stabilizer. Since all stabilizers commute with $H$, we are in case 1 or case 2 of \cref{tab:lin_comb_ops}; the operators we are concerned with are $D$ and $XD$. If $D$ is a stabilizer, 
    \[
    \vec{s}_{XD} = \vec{s}_X \neq 0.
    \]
    By \cref{lemma:A_sandwiched_in_P0}, $\mathbb{P}_0 XD \mathbb{P}_0 = 0$, satisfying the error-correcting conditions.
\end{proof}
For any logical operator $H$, the operator $HD \;\forall D \in \mathcal{E}^{(2)}$ has a trivial influence on correctability, i.e., no extra requirements on the code or $X$ are needed to satisfy the error-correcting condition. However, as we shall see, the operators $HXD$ and $XD$ are nontrivial and they impose certain restrictions on $X$.

\begin{theorem}\label{thm:qec_conditions}
    For a stabilizer code with correctable error set $\mathcal{E}$, a set of sufficient conditions such that all the rotated codes can correct $\mathcal{E}$ is:
    \begin{itemize}
        \item Hamming weight of $X> d-1$.
        \item All stabilizers within Hamming distance $d-1$ ($d-2$ for even $d$) of $X$ anticommute with $X$.
        \item All logical operators within Hamming distance $d-1$ ($d-2$ for even $d$) of $X$ commute with $X$ and anticommute with $H$.
    \end{itemize}
\end{theorem}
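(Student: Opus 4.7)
The plan is to use \cref{tab:lin_comb_ops} to decompose, for each $D = E_b^\dagger E_a \in \mathcal{E}^{(2)}$, the rotated operator $V^\dagger(t)\,D\,V(t)$ as a linear combination of Pauli operators from $\mathcal{A} = \{D, HD, XD, HXD\}$, with time-dependent coefficients determined by how $D$ commutes with $H$ and $X$. By linearity, the rotated Knill--Laflamme condition $\mathbb{P}_0 V^\dagger(t) D V(t) \mathbb{P}_0 = \gamma_{ab}(t) \mathbb{P}_0$ reduces to checking that $\mathbb{P}_0 A \mathbb{P}_0 \propto \mathbb{P}_0$ for every $A$ appearing in the expansion; the coefficients will then assemble into the Hermitian matrix $\gamma(t)$ automatically. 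The $D$-term is governed by the original code's KL condition and the $HD$-term by \cref{prop:HEbEa_correctable}. By \cref{lemma:A_sandwiched_in_P0}, $\mathbb{P}_0 A \mathbb{P}_0 \propto \mathbb{P}_0$ holds automatically when $A$ has nonzero syndrome or is a stabilizer, so the remaining task is to rule out $XD$ and $HXD$ being nontrivial logical operators.

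For the $XD$ term, observe from \cref{tab:lin_comb_ops} that it appears only when $\{X, D\} = 0$. Suppose for contradiction that $XD = \bar{L}$ is a nontrivial logical. The Pauli Hamming distance satisfies $d(\bar{L}, X) = \mathrm{wt}(XD \cdot X) = \mathrm{wt}(D)$, which is at most $d-1$ (or $d-2$ for even $d$) by the standard weight bound on products of correctable errors. Condition~3 then forces $[\bar{L}, X] = 0$, which via the identity $[XD, X] = X[D, X]$ reduces to $[D, X] = 0$, contradicting $\{X, D\} = 0$. The stabilizer sub-case is compatible with condition~2 and only produces a sandwich $\pm \mathbb{P}_0$, which is an acceptable scalar multiple.

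For the $HXD$ term, I would use a parallel approach with an intermediate reduction. If $HXD = \bar{L}$ were a nontrivial logical, then $H^2 = I$ gives $XD = H\bar{L}$, so $XD \in N(S)$. By the previous paragraph $XD$ is not a nontrivial logical, hence $XD = g \in S$ and $\bar{L} = Hg$ is logical-equivalent to $H$. The Hamming distance $d(g, X) = \mathrm{wt}(D) \le d-1$, so by condition~2, $\{g, X\} = 0$; combining this with condition~1 ($\mathrm{wt}(X) > d-1$) and the inequality $\mathrm{wt}(X) \le \mathrm{wt}(g) + \mathrm{wt}(D)$ produces a weight-counting constraint on $g$. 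A direct computation using $\{H, X\} = 0$ and $\{g, X\} = 0$ shows $\{Hg, X\} = 0$, yielding a nontrivial logical anticommuting with $X$ that violates condition~3 once $d(Hg, X)$ is controlled.

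The main obstacle is closing this $HXD$ stabilizer-equivalent sub-case cleanly, since condition~3 is phrased in terms of distance to $X$, while the hypothetical logical $\bar{L} = Hg$ sits at distance $\mathrm{wt}(HD)$ from $X$, which is not directly bounded by $\mathrm{wt}(D)$ alone. I expect the resolution to exploit the structural identity $gX = \pm D$ (which follows from $XD = g$ with $\{X, D\} = 0$) together with the $d-2$ refinement of conditions~2 and~3 for even $d$, which tightens the Hamming ball around $X$ just enough to rule out the boundary case where a small-weight stabilizer $g$ could otherwise keep $\bar{L} = Hg$ outside the reach of condition~3.
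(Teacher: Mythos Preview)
Your framework and the treatment of the $D$, $HD$, and $XD$ terms match the paper. The gap you flag in the $HXD$ sub-case is real, but the resolution is not the weight-counting you sketch; it is much simpler, and it is precisely what the paper does.

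Suppose $HXD$ appears in the expansion (cases 2 or 3 of the table) and $XD=g$ is a stabilizer. Condition~2 gives $\{g,X\}=0$; since $g=XD$, this is equivalent to $\{X,D\}=0$. Independently, $[H,g]=0$ (stabilizers commute with logicals) together with $\{H,X\}=0$ forces $\{H,D\}=0$. But $\{H,D\}=\{X,D\}=0$ is case~4, in which $HXD$ does \emph{not} appear at all---contradiction. So the problematic term never arises, and there is nothing to bound about $d(Hg,X)$. The paper organizes the whole proof around this observation: rather than treating $XD$ and $HXD$ separately, it conditions on whether $XD$ is the identity, a stabilizer, or a nontrivial logical, and shows that the three bulleted hypotheses force the benign case (case~1 or case~4) in each branch.

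Two smaller issues. First, your appeal to ``the previous paragraph'' to conclude that $XD$ is not a nontrivial logical only covers case~2 (where $\{X,D\}=0$); in case~3 you need the \emph{second} clause of condition~3 as well ($\{H,XD\}=0$ forces $[H,D]=0$, hence case~1, contradicting case~3). Second, your parity computation is off: from $\{H,X\}=0$ and $\{g,X\}=0$ one gets $[Hg,X]=0$, not $\{Hg,X\}=0$, so that line would not have delivered a contradiction with condition~3 anyway.
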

\begin{proof}
$D$ trivially satisfies the error correcting condition and $HD$ satisfies it by \cref{prop:HEbEa_correctable}. The only other operators we are concerned with are $XD$ and $HXD$. Observe that $\mathbb{P}_0HXD\mathbb{P}_0 = \mathbb{P}_0XD\mathbb{P}_0 = 0$ unless $\vec{s}_{X} = \vec{s}_{D}$. That happens when $X=D$, $XD$ is a stabilizer, or when $XD$ is a logical operator. Let us consider each case separately:

$X=D$. Since $X$ commutes with itself and anticommutes with $H$, we are in case 3 of \cref{tab:lin_comb_ops}, and the nontrivial operators are $HD$ and $HXD$. By \cref{prop:HEbEa_correctable}, $HD$ satisfies the condition. However, $\mathbb{P}_0 HXD \mathbb{P}_0 = \mathbb{P}_0 H \mathbb{P}_0 \; \not\propto \; \mathbb{P}_0$, and it is not correctable. The only way to avoid this scenario is to choose $X$ such that its Hamming weight is greater than $d-1$ so that it can never equal $D$.

$XD$ is a stabilizer. Since every stabilizer commutes with every logical operator, $[H, XD] = 0$. Since $\{H, X\}=0$, it must be true that $\{H, D\} = 0$ (case 3 or case 4). Again, $HD$ satisfies the condition. In case 3, $\mathbb{P}_0HXD\mathbb{P}_0 = \mathbb{P}_0 H \mathbb{P}_0 \; \not\propto \; \mathbb{P}_0$ and the error-correcting condition is not satisfied. However, we {\it can} satisfy the condition in case 4, i.e., $\{X, D\} = 0$. Equivalently, we need to satisfy $\{X, XD\} = 0$, i.e., all stabilizers within Hamming distance $d-1$ ($d-2$ for even $d$) of $X$ should anticommute with $X$.

$XD$ is a logical operator. We can immediately see that $\mathbb{P}_0 XD\mathbb{P}_0 \; \not\propto \; \mathbb{P}_0$ and $\mathbb{P}_0 HXD\mathbb{P}_0 \; \not\propto \; \mathbb{P}_0$. So, the only way to satisfy the error-correcting condition is in case 1, i.e., $[H, D] = [X, D] = 0$. Equivalently, $\{H, XD\} = [X, XD] = 0$, i.e., all logical operators within Hamming distance $d-1$ ($d-2$ for even $d$) of $X$ commute with $X$ and anticommute with $H$.
\end{proof}

A simple numerical search demonstrates that the conditions in \cref{thm:qec_conditions} can be satisfied for the Shor $[\![9,1,3]\!]$ code, but not for the perfect $[\![5,1,3]\!]$ code or the Steane $[\![7,1,3]\!]$ code. However, we can augment the codes with extra ancillas so that the conditions can be satisfied, as we shortly show.

\subsection{Deficient syndromes}

Consider the $[\![9,1,3]\!]$ Shor code with a logical operator $H = \sigma_1^z\sigma_4^z\sigma_7^z$, and $X = \sigma_1^x\sigma_4^x\sigma_7^x$, so that $X$ anticommutes with $H$, and with the stabilizer generators $\sigma_1^z\sigma_2^z$, $\sigma_4^z\sigma_5^z$ and $\sigma_7^z\sigma_8^z$. A numerical search shows that $\vec{s}_{X}\neq \vec{s}_{D} \; \forall D \in \mathcal{E}^{(2)}$. With this choice of $H$, $X$ and $\mathcal{E}$, we can satisfy the error-correcting conditions throughout the loop.

It is easy to verify that the syndromes of all pairs of errors include all syndromes for the perfect $[\![5, 1, 3]\!]$ and the Steane $[\![7, 1, 3]\!]$ codes; for any $X$, there will be choices of $E_bE_a$ with the same syndrome, and no $X$ satisfies all the required commutation relations. For this construction, it is helpful for a code to have more ``slack'' to satisfy the conditions. We can augment the codes by appending ancilla qubits, thereby increasing the number of syndromes. Consider the Steane $[\![7,1,3]\!]$ code for example; we add to this code one ancilla qubit with stabilizer generator $Z$, i.e., the extra qubit is always in the state $\ket{0}$. Then the projector on to the code space becomes 
\[
    \mathbb{\bar{P}}_0 \equiv \mathbb{P}_0\otimes \ket{0}_A\bra{0} ,
\]
where the subscript $A$ stands for the ancilla. The bar stands for the updated code. We choose $\bar{H} = H\otimes I_A$ and $\bar{X}=X\otimes \sigma^x_A$. For instance, we can choose $H=Z^{\otimes7}$ and $X=XZIIIII$. Assume that the correctable set contains all weight-$1$ Pauli operators on the eight qubits plus the identity. (Note that this is fewer errors than this code could correct: it could correct any weight-$1$ error on the seven qubits, together with any error on the ancilla.) We claim that this construction satisfies \cref{thm:qec_conditions} based on two observations:
\begin{itemize}
    \item $X$ has a syndrome different from any weight-$1$ error. So $\bar{X}=X\otimes \sigma^x_A$ has a syndrome different from any $E_bE_a$ and $\mathbb{\bar{P}}_0 \bar{X} E_bE_a \mathbb{\bar{P}}_0 = \mathbb{\bar{P}}_0 H \bar{X} E_bE_a \mathbb{\bar{P}}_0 = \mathbb{\bar{P}}_0 HE_bE_a \mathbb{\bar{P}}_0 = 0$ for any $E_b, E_a \in \mathcal{E}$.
    \item Since the code is non-degenerate, $\mathbb{\bar{P}}_0E_bE_a\mathbb{\bar{P}}_0=0$ for all $b\neq a$. For $b=a$, $E_aE_a=I$, which commutes with $H$ and $X$.
\end{itemize}

Note that this construction fails for the $[\![5,1,3]\!]$ code; since it is ``perfect'', there is a weight-$1$ error corresponding to every syndrome. So any choice $X$ is distance $1$ away from at least one stabilizer or logical operator---for any $X\otimes X_A$ there is a pair $E_bE_a$ with the same syndrome. However, this code can be made to work by using {\it two} ancilla qubits $A_1$ and $A_2$, with any $H=H\otimes I_{A_1}\otimes I_{A_2}$. Choose $H=\otimes_{i=1}^5\sigma_i^z$, and $X=\sigma^x_1\otimes \sigma^x_{A_1}\otimes \sigma^x_{A_2}$. Then no $E_bE_a$ will have the same syndrome as $X$ and the code is nondegenerate. Here, we transformed the $[\![5,1,3]\!]$ code into a $[\![7,1,3]\!]$ code by adding two extra qubits and two extra stabilizers $\sigma_6^z$ and $\sigma_7^z$ to the original 4 stabilizers. By increasing the number of syndromes, we ensure that all the rotated codes can still correct the natural correctable set of the original code.

In \cref{sec:fault_tolerance}, we showed a way to correct a measurement-induced error, which appears as a Markovian $X$ error. However, in the error-correcting conditions considered here, the correctable set of the original code need not include the $X$ operator. Therefore, a measurement-induced error can become uncorrectable if the original correctable set contains a stabilizer element that anticommutes with $X$. To see this, observe the action of a correctable error $E$ on an instantaneous code state:
\[
    \ket{\psi_E(t)} = E\ket{\bar\psi(t)} 
    = U(t)E(t)\ket{\bar\psi},
\]
where we define the rotated error operator
\[
E(t)=U^\dagger(t) E U(t)
\]
acting on the unrotated code state, and \[U(t)=V(t)\exp\left(-i\frac{\theta}{4\pi}\sin(2\omega t)H\right).\] If $E$ is a stabilizer that anti-commutes with $X$, the \emph{unrotated} error state $E(t)\ket{\bar\psi}$ becomes a superposition of states with syndromes $\textbf{s}_E=\vec0$ and $\textbf{s}_{EX}=\textbf{s}_X$. Since we continue measuring even after an error has happened, the state relaxes onto one of the syndrome spaces. In the latter case, the state acquires a logical phase that may be different from that of a measurement-induced error. Therefore, the correction for $X$ differs from that of $EX$ by a logical rotation.

A straightforward method to avoid this scenario is to choose an ancilla code such that its stabilizers are not included in the natural correctable set. For instance, we can append $2$ ancilla qubits initialized in the Bell-state to the $[\![5, 1, 3]\!]$ code. The Bell-state has stabilizers generated by $\langle X_{A_1}X_{A_2}, Z_{A_1}Z_{A_2}\rangle$, where $A_1$ and $A_2$ denote the ancilla qubit indices. This code can correct any arbitrary single-qubit Pauli error on the $7$ qubits, as well as the $X$ error.

\section{Discussion}\label{sec:discussion}

In this paper, we have proposed a framework to generate continuous-path holonomies in arbitrary stabilizer codes by continuously measuring the rotated stabilizers. When the rotation forms a loop, the holonomy corresponds to a logical rotation by an arbitrary angle in the code space. Due to the inherent stochasticity of quantum measurements, the code state may jump into the rotated error space if the rotation is not made adiabatically. We showed a method to change the path so that at the end of the evolution, we still achieve the desired holonomy; this provides flexibility in choosing the gate times. In this work, we have not treated the effects of external noise. In practice, however, the system constantly interacts with its environment, causing either Markovian or non-Markovian errors. Since we continuously measure the stabilizers, the Zeno effect can suppress the non-Markovian noise \cite{hsu_brun, convy_machine_2022, ahn_continuous_2002, lanka2025optimizingcontinuoustimequantumerror, nila2025continuousquantumcorrectionmarkovian, qze_encoded_comp, Dominy_2013}. To eliminate Markovian errors, however, we would have to perform error-correction. We would like to find an analogous way to change the path that depends both on the correctable error and when it occurred so that the logical gate is successfully performed. This ensures full fault-tolerance of the protocol. We also gave sufficient conditions on the measurement observables so that the Knill-Laflamme condition is satisfied for all the rotated codes. When these conditions are not satisfied, we can still use this protocol by adding ancilla qubits to the original code, choosing the ancilla code such that its stabilizers differ from all the errors in the correctable set. We analyze these questions and provide constructive solutions in the companion paper \cite{lanka2026steeringpathsmidflightfaulttolerance}. Nevertheless, a more general construction that includes the $X$ error as one of the correctable errors is desired.

The measurement procedure described in this work may be physically realized by engineering a time-dependent Hamiltonian between the data qubits and an ancilla qubit, while measuring the ancilla qubit continuously. For smaller codes, it may be possible to engineer this Hamiltonian interaction such that it is always pairwise local \cite{marvian_engr_hams, chen_brun, Hung_2016, Haas_2019}. In this work, we have assumed stabilizer codes with Pauli correctable sets for simplicity. In principle, we can generalize this to arbitrary codes. For instance, we can consider codes defined on \emph{qudit}-systems but encoding logical qubits. Such systems may be better suited to engineering such Hamiltonian interactions. Many open questions remain about this approach and how to implement it in realistic systems, but it is a remarkably interesting and fertile topic for future study.

\acknowledgements

We thank Daniel Lidar, Ben Reichardt, Eli Levenson-Falk, Itay Hen and Onkar V. Apte for useful discussions. This work was supported in part by the U. S. Army Research Laboratory and the U. S. Army Research Office under contract/grant number W911NF2310255, and by NSF Grant FET-2316713.

\bibliography{references}

\appendix

\begin{widetext}

\section{Proof of \cref{lemma:small_rot_sandwiched_P0}}\label{app:small_rot_sandwiched_P0_proof}
Assume that the rotation increment $\delta\phi\ll1$. Then the rotation operator $V^\dagger(\varphi)V(\varphi-\delta\phi)$ is \emph{weak}; sandwiching it between $\mathbb{P}_0$, we obtain
\[
    \mathbb{P}_0V^\dagger(\varphi)V(\varphi-\delta\phi)\mathbb{P}_0 = \mathbb{P}_0 e^{-i\varphi X}e^{-i\frac{\theta}{2\pi}\varphi H}e^{i\frac{\theta}{2\pi}(\varphi-\delta\phi) H}e^{i(\varphi-\delta\phi) X}\mathbb{P}_0.
\]
Using that fact that $\mathbb{P}_0X\mathbb{P}_0 = \mathbb{P}_0XH\mathbb{P}_0 = 0$, we obtain
\begin{subequations}
    \begin{align}
        \mathbb{P}_0 V^\dagger(\varphi) V(\varphi-\delta\phi) \mathbb{P}_0 &=
            \cos\left(\frac{\theta}{2\pi} \delta\phi\right) \cos(\delta\phi) \mathbb{P}_0 - i\sin\left(\frac{\theta}{2\pi} \delta\phi\right) \cos(2\varphi - \delta\phi) H \mathbb{P}_0 \\
        &\quad \equiv c_\varphi e^{-i\xi_\varphi H} \mathbb{P}_0.
    \end{align}
\end{subequations}
The normalizing constant $c_\varphi$ is given by
\[
\begin{split}
    c_\varphi 
    &= \left[
        \cos^2\left(\frac{\theta}{2\pi} \delta\phi\right) \cos^2(\delta\phi) + \sin^2\left(\frac{\theta}{2\pi} \delta\phi\right) \cos^2(2\varphi - \delta\phi) \right]^{1/2} \\
    &= \left[
        1 - \left(\frac{\theta}{2\pi} \delta\phi\right)^2 (1-\delta\phi)^2 + \left(\frac{\theta}{2\pi} \delta\phi\right)^2 \cos^2(2\varphi - \delta\phi) \right]^{1/2} + \mathcal{O}(\delta\phi^4) \\
    &= \left[
        1 - \delta\phi^2\left(1 + \frac{\theta^2}{4\pi^2}\sin^2(2\varphi-\delta\phi)\right) \right]^{1/2} + \mathcal{O}(\delta\phi^4) \\
    &= \left[
        1 - \delta\phi^2\left(1 + \frac{\theta^2}{4\pi^2}\sin^2(2\varphi)\right) \right]^{1/2} + \mathcal{O}(\delta\phi^3) \\
    &= 1 - \frac{\delta\phi^2}{2} \left( 1 + \frac{\theta^2}{4\pi^2} \sin^2(2\varphi) \right) + \mathcal{O}(\delta\phi^3),
\end{split}
\]
and the angle $\xi_\varphi$ follows:
\[
    \tan(\xi_\varphi) = \frac{\sin\left(\frac{\theta}{2\pi}\delta\phi\right)\cos(2\varphi-\delta\phi)}{\cos\left(\frac{\theta}{2\pi}\delta\phi\right)\cos(\delta\phi)} 
    = \frac{\theta}{2\pi}\cos(2\varphi)\delta\phi + \mathcal{O}(\delta\phi^2),
\]
where we used the fact that $\xi_\varphi = \arctan(\xi_\varphi) + \mathcal{O}(\delta\phi^3)$.

\section{Proof of \cref{lemma:A_sandwiched_in_P0}}\label{sec:A_sandwiched_in_P0_proof}

When $\vec{s}_A \neq 0$, then by definition, $A$ anticommutes with at least one of the stabilizer generators. Let $C$ ($AC$) be the set of generators that commute (anticommute) with $A$. Then,
\[
    \begin{aligned}
        \mathbb{P}_0 A\mathbb{P}_0 &= \left(\prod_{g\in S(0)}\frac{I+g}{2}\right)A\left(\prod_{g\in S(0)}\frac{I+g}{2}\right) \\
        &= \left(\prod_{g\in S(0)}\frac{I+g}{2}\right)\left(\prod_{g \in S_\mathrm{c}(0)}\frac{I-g}{2}\right)\left(\prod_{g \in S_\mathrm{c}(0)}\frac{I+g}{2}\right) A
        = 0.
    \end{aligned}
\]
On the other hand, when $\vec{s}_A = 0$, $A$ commutes with all the generators. Then $A$ is an element of the normalizer of the stabilizer group:
\[
    A \in N(S) = \{g \in \mathcal{P}_n : g S_ig^\dagger \in S \; \forall S_i \in S \},
\]
which includes both stabilizers and logical operators.

\section{Instantaneous code state}\label{sec:inst_estate}

For simplicity, we assume that the evolution happens with projective measurements of the rotated code space, defined by the rotation operator \cref{eq:rotation_unitary}. But the obtained error operator should be the same in the case of continuous measurements under the limit of small rotations.

When the state evolves as the $+1$-eigenstate of the rotated code space projector (i.e., no error), the resultant state after rotation by $\zeta$ is
\[
\begin{split}
    \ket{\psi_\zeta} &= \mathbb{P}(\zeta)\cdot \mathbb{P}(\zeta-\delta\phi)\cdots\mathbb{P}_0\ket{\bar{\psi}}\\
    &= [V(\zeta)\mathbb{P}_0 V^\dagger(\zeta)]\cdots[V(\delta\phi)\mathbb{P}_0V^\dagger(\delta\phi)]\ket{\bar{\psi}}.
\end{split}
\]
This is a product of factors of the form $\mathbb{P}_0 V^\dagger(\varphi)V(\varphi-\delta\phi)\mathbb{P}_0$, where $\varphi=0...\zeta$. From \cref{lemma:small_rot_sandwiched_P0}, we obtain the ``partial'' $H$ rotation angle $\sum_l\xi_{l\delta\phi}$. Observe that all the factors depend only on $H$ and $\mathbb{P}_0$, which commute with each other. So the state after rotation by $\zeta$, conditioned on all successful measurements, is
\[
\begin{split}
    \ket{\psi_\zeta} &= V(\zeta)\exp\left(-i \sum_{l=1}^{\ceil{\zeta/\delta\phi}}\xi_{l\delta\phi}H\right)\ket{\bar{\psi}} \\
    &= V(\zeta) \exp\left(-i\frac{\theta}{4\pi}\sin(2\zeta)H\right)\ket{\bar{\psi}} + \mathcal{O}(\delta\phi^2).
\end{split}
\]
Re-parameterizing the angle $\zeta = \omega t$ as a function of continuous time, we obtain the instantaneous code state
\[
    \ket{\psi(t)} = V(t) \exp\left(-i\frac{\theta}{4\pi}\sin(2\omega t)H\right)\ket{\bar{\psi}}.
\]

\end{widetext}

\end{document}